\documentclass[submission,copyright,creativecommons,english]{eptcs}
\usepackage{latexsym,amssymb,amsmath,amsthm,mathdots,ellipsis,multirow}
\usepackage[T1]{fontenc}
\usepackage[utf8]{inputenc}
\usepackage{stmaryrd}
\usepackage{amstext}
\usepackage{graphicx}
\usepackage{float}
\usepackage{algorithm}
\usepackage{xcolor}
\usepackage{versions}\excludeversion{ignore}

\newtheorem{postulate}{Postulate}
\newtheorem{theorem}{Theorem}
\newtheorem{lemma}{Lemma}
\newtheorem{definition}[theorem]{Definition}


\newcommand{\F}{{F}}

\newcommand{\IN}{\mbox{~\bf in~}}
\newcommand{\LET}{\mbox{\bf let}\;}
\newcommand{\IF}{\mbox{\bf if}\;}
\newcommand{\THEN}{\;\mbox{\bf then}\;}
\newcommand{\ELSE}{\;\mbox{\bf else}\;}

\newcommand{\True}{\textsc{true}}


\newcommand{\Choose}{\epsilon}   

\sloppy

\title{Cellular Automata are Generic}

\author{Nachum Dershowitz 
\institute{School of Computer Science\\Tel Aviv University\\Tel Aviv, Israel}
\email{nachum.dershowitz@cs.tau.ac.il}
\and Evgenia Falkovich%
\footnote{This work was carried out in partial fulfillment of the requirements for the Ph.D.\ degree
of the second author.}
\institute{School of Computer Science\\Tel Aviv University\\Tel Aviv, Israel}
\email{jenny.falkovich@gmail.com}}
\date{\today}
\usepackage{babel}
\begin{document}
\maketitle
\begin{abstract}
Any algorithm (in the sense of Gurevich's abstract-state-machine axiomatization of classical algorithms) operating over any arbitrary unordered domain can be simulated by a dynamic cellular automaton, that is, by a pattern-directed cellular automaton with unconstrained topology and with the power to create new cells.  
The advantage is that the latter is closer to physical reality.
The overhead of our simulation is quadratic.
\end{abstract}

\begin{quote}\raggedleft
\textit{Order gave each thing view.}\\[1ex]
\small---William Shakespeare (\textit{King Henry VIII})
\end{quote}

\section{Introduction} 

Recent years have seen progress in the understanding of the fundamental notions of computation.
Classical algorithms were axiomatized by Gurevich~\cite{ASM-Theorem-Gurevich}, who also showed that a simple, generic model of computation, called \emph{abstract state machines (ASMs)}, suffices to emulate state-for-state and step-for-step any
ordinary (non-interactive, sequential) algorithm.
In~\cite{Exact}, it was shown that the emulation can be made precise in that it does not access locations in states that the original algorithm does not.
In \cite{CT_ASM,BSL}, it was shown that any algorithm that satisfies an additional effectiveness axiom---regardless of its program constructs and data structures---can be simulated by what we call an \emph{effective ASM (EASM)}, which is an ASM whose atomic actions are effective constructor and destructor operations.
Moreover, such effective algorithms over arbitrary domains can be efficiently simulated by a random access machine (RAM), as shown in \cite{ECTT,invariance}.
In this way, the gap between the
informal and formal notions of computation has been reduced, and
the classical Church-Turing thesis---that Turing machines entail all manner of effective computation---and its extended version---claiming that ``reasonable'' effective models have comparable computational complexity---both sit on firmer foundations.

At the same time, von Neumann's cellular model~\cite{cell} has been enhanced to encompass more flexible forms of computation than were  covered by the original model.
In particular, the topology of cells can be allowed to change during the evolution of an interconnected device, in what has been called ``causal graph dynamics''~\cite{causal}.
Cellular automata have the advantage of 
better reflecting the laws of physics that a real computing
machine must comply with.
They respect the ``homogeneity'' of space-time in that processor cells and memory cells are uniform in nature,
in contradistinction with Turing machines, RAMs, or ASMs, whose controls are centralized.
This cellular approach can help us better understand under what conditions the physical Church-Turing thesis~\cite{Gandy}, stating that 
no physically plausible device can compute more functions than a Turing machine can, might hold~\cite{PCTT}.

In what follows, we show that any algorithm
can be simulated by a dynamic cellular automaton,
thus showing that a homogenous physically-plausible model 
can implement all algorithmic computations.
We begin, in the next section, with basic information about cellular automata and abstract state machines.
It is followed by a description of the simulation and a brief discussion.

\section{Background}
\subsection{Cellular Automata}

Classical cellular automata
are defined as a static tessellate of cells. 
Initially, each cell is in one of a set of predefined internal states, conventionally identified with colors, {of which we will have only finitely many.}
Sitting somewhere to the side is a clock, and every time it ticks, the colors of the cells change.
Each cell looks at the colors of its nearby cells and at its own color and then applies a  \emph{transition} rule, 
specified in advance, to determine the new color it takes on for  the next clock tick.
Transitions are simple finite-state automata rules.
In this model,
all cells change at the same time and their
transition rules are all the same.

The underlying topology may take different shapes and have different dimensions.
The definition of neighborhood may vary from one automaton to another.
On a two-dimensional grid, the neighbors may be the four cells in the cardinal directions (called the ``von Neumann neighborhood''), or it can include the for corner cells (the ``Moore neighborhood''), or perhaps  a block or diamond of larger size.
In principle, any fixed group of cells of any arbitrary shape can be looked out to determine which transition applies.
A \emph{sequential} automaton is the special case when one cell is  active and only that cell can
perform a transition step. In addition, the transition  marks one of the active cell's neighbors (or itself) to be active for the following step.

To model reality better, one should consider the possibility that the connections between cells also evolve over time.
For \emph{dynamic cellular automata}~\cite{causal}, cells are organized in a directed graph. 
Similar to the above classical case,  each cell is colored in one of a palette of predefined colors.
Edges also have colors, to indicate the type of connection between cells, adding  flexibility.
Transitions are governed by global clock ticks.
In the sequential case one cell is marked active.
This cell inspects  its neighborhood and applies a transition rule.

The difference between the static  and dynamic cases is that in the static case, the  transition is governed by different colorings of the cells in a fixed neighborhood,
while in the dynamic case, it is governed by a set of different neighborhood patterns, each with various colored cells connected by colored edges. 
In both cases, a transition rule defines a transformation of the cells in a detected pattern: in the static case colors  change,  while in the dynamic case, connections may also change and new cells may be added.  
With each clock tick, the active cell inspects its neighborhood to detect one of those predefined patterns.
Then the transition rule is applied according to the  detected pattern.
(Cells never die in this model, but they may become disconnected from every other cell.)
Examples of such transitions are shown in Figure~\ref{examples}.
\begin{figure}
\[\includegraphics[scale = 0.4]{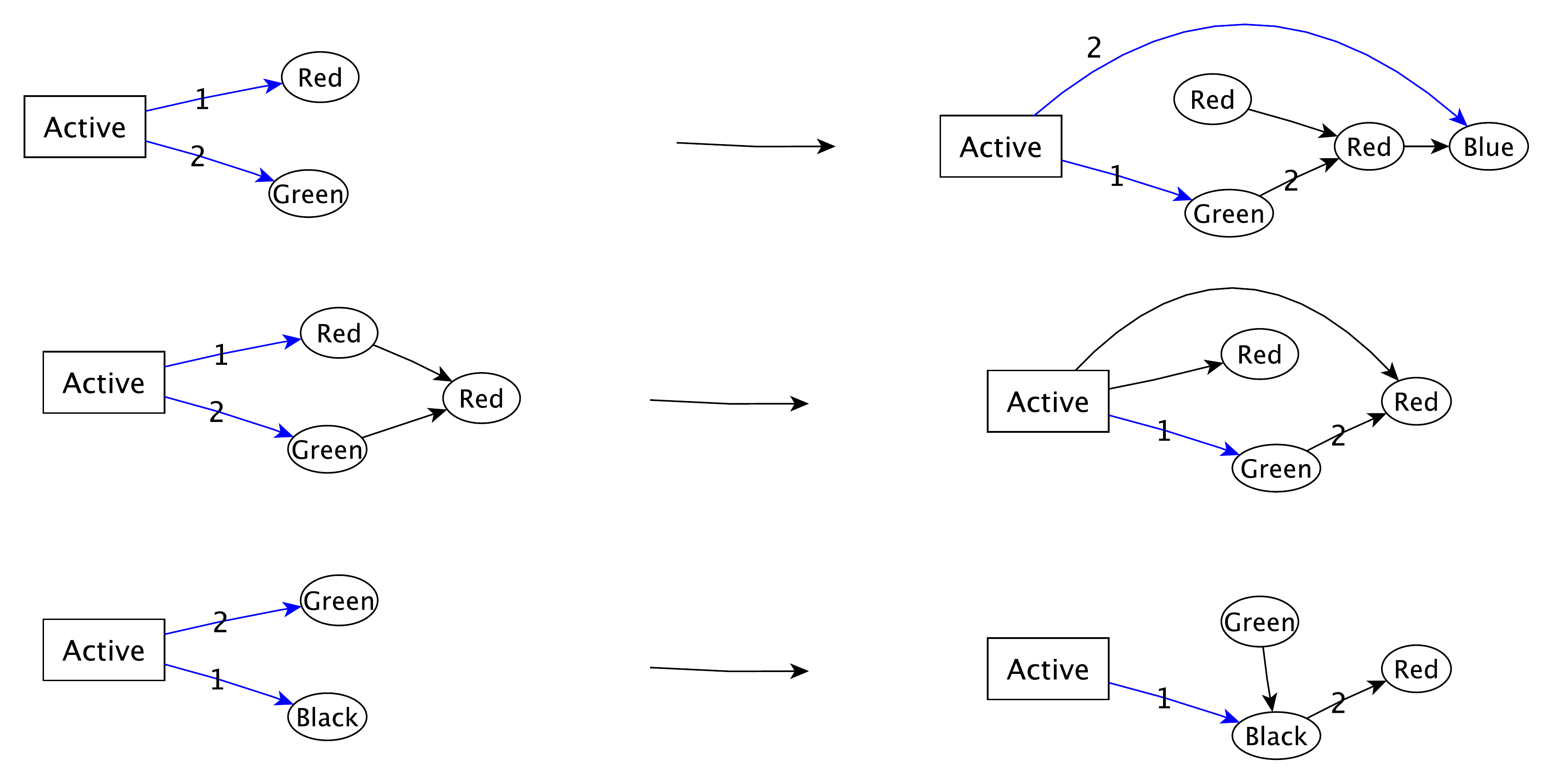}\]
\caption{Examples of transition rules.}\label{examples}
\end{figure}

\begin{ignore}
(We only  restrict  that cells are not  removed.  This restriction is redundant for sequential case we describe here, but is important for future generalization for parallel, distributive and continuous cases. We introduce it in sequential part as well to allow a sequential case to be a specific case of extended model. A node may actually be removed by making it inaccessible from other nodes, by removing incoming edges.)
\end{ignore}

Note that there might be several transition patterns in the neighborhood of an active cell.
For example, given that an active cell detects a pattern of the second type in the  example in the figure, it might choose to act according to the first rule instead.
If a  neighborhood of the active cell contains pattern $p$, while some subset of its cells also constitute a transition pattern $p'$,
we can demand that no transition be applied using $p'$. We call this restriction \emph{maximality}.
(Intuition may be purchased from the following scenario.
Assume that your neighbors make a lot of noise from time to time. If at a given time point you have only one noisy neighbor, you might decide to stay put in peace.
But if there are two of them, you would want to call the police. What's worse, if you have three or more rowdy neighbors, you might also need an ambulance.
If there is some noise around, a transition might be applied erroneously, as if there were only one noisy neighbor, which is not the natural intent.)
So, we want the more specific rules to take precedence over the less constrained ones.%
\footnote{{An alternative would be to supply a (partial) order according to which transition rules are tried.}}
In our example, if the second pattern is applicable, then the first one is not  applied.
All the same, patterns may overlap, so transitions remain non-deterministic. For example, consider the following neighborhood:
\[\includegraphics[scale = 0.4]{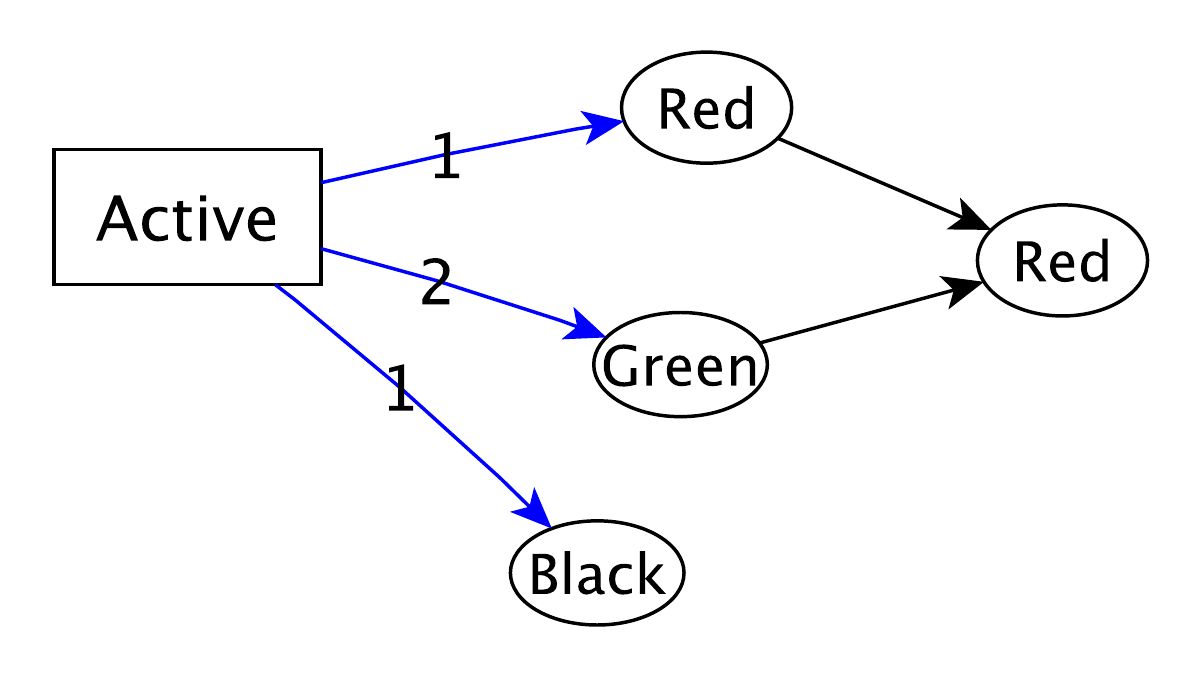}\]
In general, these choices can affect the final result, but the simulation we describe has the same outcome regardless.

\subsection{Classical Algorithms} \label{sec:alg}

Gurevich~\cite{ASM-Theorem-Gurevich} has axiomatized generic algorithms as follows (see also the exposition in~\cite{Generic}):

\begin{postulate}[Axiom of Algorithmicity~\cite{ASM-Theorem-Gurevich}]\ 
\begin{itemize}
\item[\rm (I)]
An \emph{algorithm}  is posited to be a state-transition system comprising a set
(or class)
of \emph{states} and a partial \emph{transition} function 
from state to \emph{next} state.  
\item[\rm (II)]
States may be seen as (first-order) logical structures
over some (finite) vocabulary, closed under isomorphism (of structures). 
Transitions preserve the domain (universe) of states, and, furthermore,
isomorphic states are either both 
terminal (have no transition) or else their next states are isomorphic (via
the same isomorphism). 
\item[\rm (III)]
Transitions are governed by a finite, input-independent set of  \emph{critical} (ground) terms over the vocabulary
such that,
whenever two states assign the same values to those terms, either both are
terminal  or else whatever changes (in the interpretations of operators) a transition makes to one,
it also makes to the other. 
\end{itemize}
\end{postulate}

States being structures, they include, not only assignments for programming ``variables'', but also the ``graph'' of the operations
 that the algorithm can apply.
We may view a state over $\F$ with domain (universe) $D$
as storing a (presumably infinite) set of \emph{location-value} pairs $f(a_{1},\ldots,a_{n})\mapsto b$,
for all $f\in\F$ and $a_{1},\ldots,a_{n}\in D$ and some $b\in D$. 
So, by ``changes'', we mean $\tau(x)\setminus x$, where $\tau$ is the transition function, which gives the set of changed location-value pairs.
We  treat relations as truth-valued functions and consider all states to be
potential input states.

As in this study we are only interested in classical deterministic algorithms, transitions are functional.
We use the adjective {}``classical'' to clarify that, in the current
study, we are leaving aside new-fangled forms of algorithm, such as
probabilistic, parallel, or interactive algorithms.
For detailed support of this axiomatic characterization of algorithms
and relevant citations from the founders of computability theory,
see~\cite{ASM-Theorem-Gurevich,BSL}.

Item I is meant to exclude ``hypercomputational''  formalisms, such as~\cite{Gold,Putnam}, 
in which the result of a computation---or the continuation of a computation---may depend
on (the limit of) an infinite sequence of preceding (finite or infinitesimal) steps.
Likewise,
processes in which states evolve continuously (as in analog processes, like the position of a bouncing ball), rather than discretely, are eschewed.
Naturally, in this work, we are only interested in deterministic algorithms, which is why transitions are a partial function on states.

States as structures make it possible to consider all data structures sans encodings.
In this sense, algorithms are generic.
Item II precludes states with infinitary operations, like the supremum of infinitely many objects, which would not make sense from an algorithmic point of view. 
The structures are ``first-order'' in syntax, though
domains may include sequences, or sets, or other higher-order objects, in which case, the state would provide operations for dealing with those objects.
The identification of states with structures is justified by the vast experience of mathematicians and scientists who have faithfully and transparently
presented every kind of static mathematical or scientific reality as a logical structure.
In restricting structures to be ``first-order'', we are limiting the \emph{syntax} to be first-order. 
Closure under isomorphism ensures that the algorithm can
operate on the chosen level of abstraction and that
states' internal representation of data is invisible and immaterial to the algorithm.
The same algorithm will work equally with different representations, as
for example, testing primality of numbers whether given as decimal digits, as Roman numerals, or in Neolithic tally notation.
This means that the behavior of an \textit{algorithm}, in contradistinction with its ``implementation'' as a C program---cannot,
for example, depend on the memory address of some variable. 
If an algorithm does depend on such matters, then its full description must also include specifics of memory allocation.

The intuition behind the third item in the postulate is that it must be possible to describe the effect of transitions in
terms of the information in the current state.
Unless all states undergo the same updates unconditionally, an algorithm must
explore one or more values at some accessible locations in the current state before determining how to proceed.
The only means that an algorithm has with which to reference locations is via terms,
since the values themselves are abstract entities.
If every referenced location has the same value in two states,
then the behavior of the algorithm must be the same for both of those states.
This postulate---with its fixed, finite set of critical terms---precludes programs of infinite size (like an infinite table lookup) or which 
are input-dependent.

\subsection{Generic Programs}

It has been shown in~\cite{ASM-Theorem-Gurevich} that every algorithm, in the sense formalized above, can be emulated step-by step, state-by-state by a particular form of algorithm:

\begin{definition}[ASM Program \cite{Lipari}]\label{def:asm} \emph{ASM
programs}, over some vocabulary $F$, are composed of assignments and
conditionals. 
\begin{itemize}
\item A generalized \emph{assignment} statement $$f(s^{1},\ldots,s^\ell):=u$$ involves
terms $u,s^{1},\ldots,s^\ell$ over $\F$. Applying it to a state $X$
changes the interpretation that the state gives to $f$ at the point
$(s_{X}^{1},\ldots,t_{X}^\ell)$ to be $u_{X}$, where $t_X$ denotes the value
that state $X$ gives to term $t$.
\begin{ignore}
The result is an algebra
$X'$ such that $t_{X'}=t[f(s^{1},\ldots,s^\ell)\mapsto u]_{X}$ for any
term $t$ over $\F$, where $t[s\mapsto u]$ denotes the term obtained from
$t$ by simultaneous replacement of all occurrences of the subterm
$s$ in $t$ by $u$. 
\end{ignore}
\item Program statements may be prefaced by a \emph{conditional} test, $$\IF c\THEN p \mbox{\rm ~~~~or~~~~}
\IF c\THEN p\ELSE q$$
where $c$ is a Boolean combination of
equalities between terms.
Only those branches of conditional statements whose condition evaluates to \True{} 
are executed. 
\item
{As a matter of convenience, a program statement may also be prefaced by $\LET x=t \IN\dots$,
which has the same effect as if  all occurrences of $x$ in the statement were replaced by $t$.}
\item Furthermore, statements may be composed in \emph{parallel:} $A~\|~B$.
\item The program, as such, defines a single transition, which is executed repeatedly,
as a unit, until no assignments are enabled by the conditions preceding them. When no assignments are
enabled, then there is no next state. 
\end{itemize}
\end{definition}

By~\cite{ASM-Theorem-Gurevich},
each transition of a classical algorithm can be described by a  bounded number of actions of comparisons and assignments.

All models of effective, sequential computation
(including Turing machines, counter machines, pointer machines, etc.)
satisfy the above algorithmicity postulate, and can therefore be programmed as ASMs.
By the same token,
 idealized algorithms for computing with real numbers, or for geometric constructions
with compass and straightedge
(see~\cite{Reisig04} for examples of the latter)
can also be precisely described by ASM programs.
See~\cite{Generic}.


ASMs work over arbitrary domains, but in this work, we use sets of atoms.
A \emph{(fair) unordered domain} consists of a finite set $A$ of atoms $a_1,\ldots, a_n$, including the empty set $\varnothing$, and any set obtained from atoms by a finite number of applications of the following set-theoretical operations:
\begin{itemize}
\item $\{\}$, the \emph{singleton} former;
\item $\cup$, the \emph{union} of two sets.
\end{itemize}
An algorithm is also supported with oracle access to
\begin{itemize}
 \item a binary Boolean \emph{membership} predicate $\in$ with its usual set-theoretic interpretation;
 \item a unary \emph{choice} operation $\Choose$ which returns an (arbitrary) element from a given non-empty set, which is then used in a program statement.
{The $\LET$ construct may be used to ensure that the same choice is made in more than one place.} 
 \end{itemize}
 
Even though computational paths might differ from run to run, an algorithm must commit to the same  output despite the choices it makes.
This class of choice-based algorithms over unordered domains was introduced in \cite{BGS-CPT}, where it was  proved that a matching problem for graphs can be computed over these unordered structures, but not if  choice is replaced by  unbounded  parallelism.
Later in \cite{BGS}, it was shown that supporting structures with counting resolves this issue.

\section{Simulating Algorithms with Cellular Automata}

We allow  only finitely-describable  topologies for cellular automata, and we  bound their dynamics, requiring that its  transition relation should also be describable by a finite number of patterns. 

Our main result is that cellular automata with bounded dynamics can simulate the behavior of  any classical algorithm over any unordered domain.
We first show how the graph structures of cellular automata can represent the unordered domains of  algorithms. Then we show 
how a transition may simulate manipulations of domain elements.

\subsection{Bounded Dynamics}

Suppose some domain is constructed over two atoms \textsf{a} and \textsf{b}. The classical tree representation of an element $\{\{\textsf{a,b}\},\{\{\textsf{a}\}\},\{\textsf{a}\}\}$   looks like this:
\[\includegraphics[scale = 0.4]{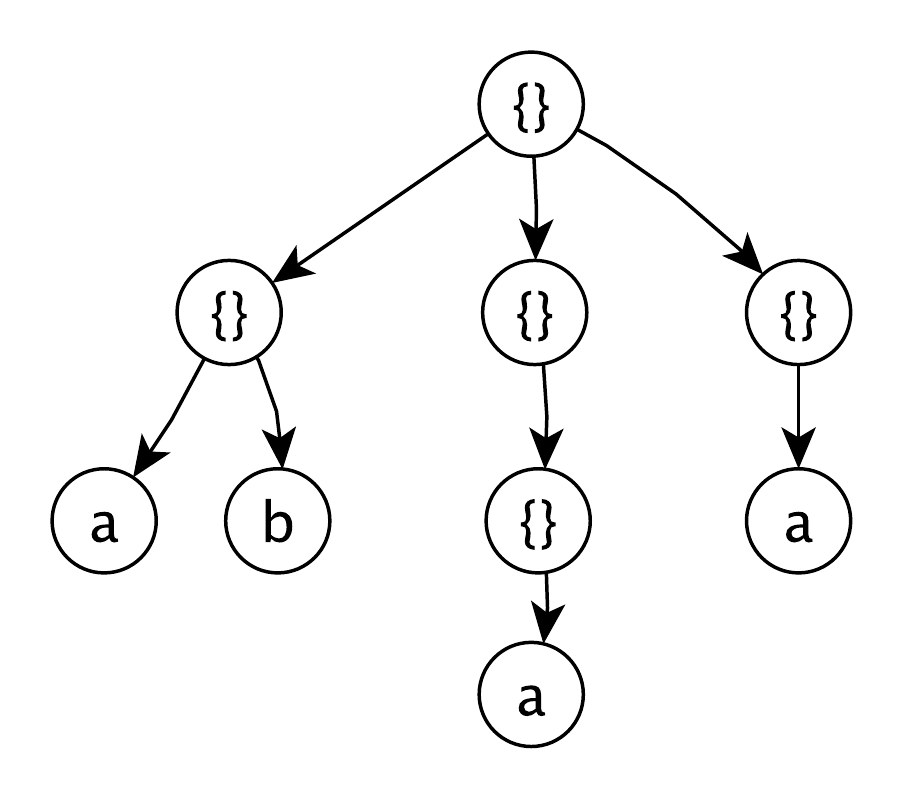}\]
To avoid obvious reduplication of data, we should use edges pointing to shared locations. This representation is called a \emph{term-graph}~\cite{TermGraph}, and  our sample element will look like this:
\[\includegraphics[scale = 0.4]{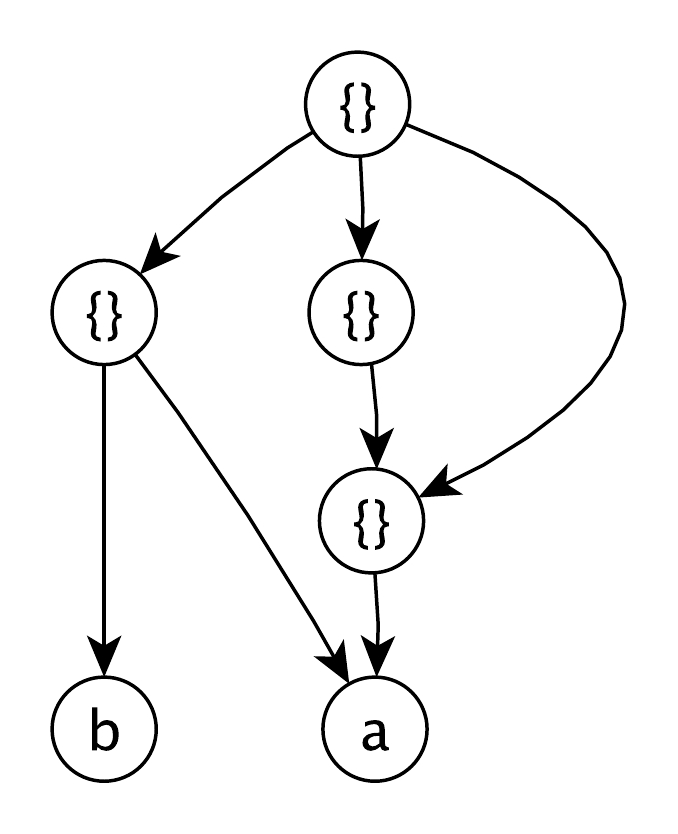}\]

Now, assume that  we want to represent two distinct elements   $\{\{\textsf{a,b}\},\{\{\textsf{a}\}\},\{\textsf{a}\}\}$ and $\{\{\textsf{a,b}\},\{\textsf{b}\}\}$. To avoid reduplication here, we again use pointers to  locations shared by both and call the resulting structure a \emph{tangle}~\cite{ECTT}. In our example, the tangle will look as follows:
\[\includegraphics[scale = 0.4]{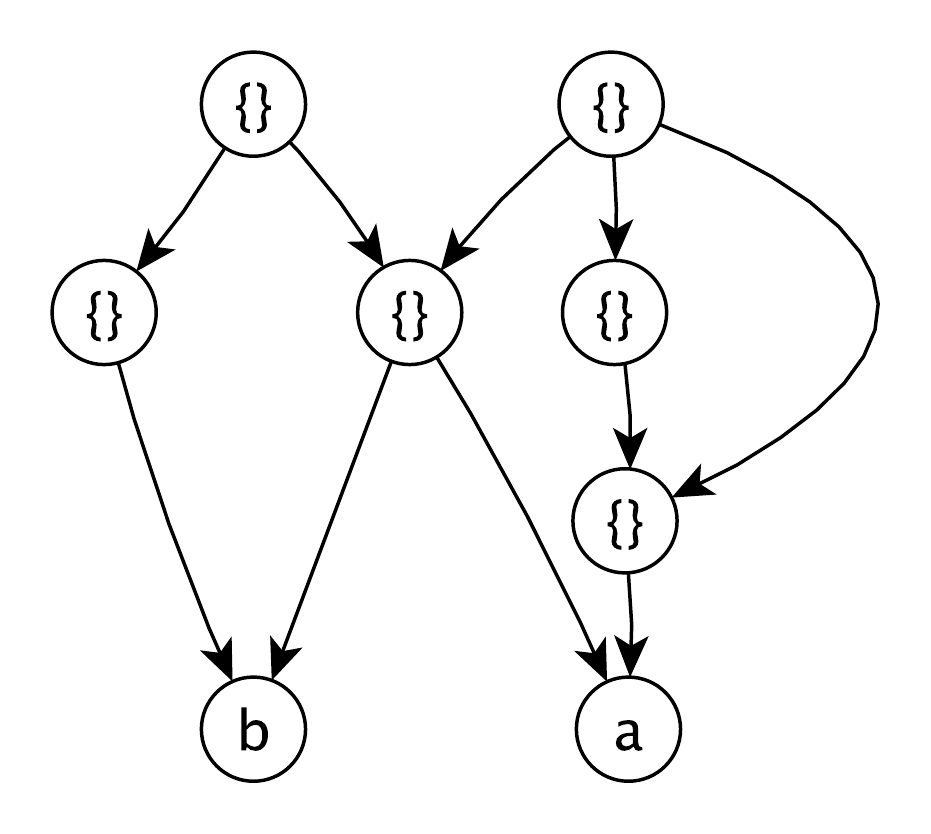}\]

Next, we need to represent the values of functions. We use a slight modification:
For each $k$ such that an ASM has a non-constructor function of arity $k$, we append to the tangle an ordered $k$-tuple.
Assume that our vocabulary has a  binary function $g(\cdot,\cdot)$, and assume our ASM has critical terms $t$ and $p$.  Suppose we need to represent state $X$ with values $t=\{\{\textsf{a,b}\},\{\{\textsf{a}\}\},\{\textsf{a}\}\}$, $p=\{\{\textsf{a,b}\},\{\textsf{b}\}\}$, and $g(t,p)=\{\textsf{a,b}\}$. For convenience, we add a focus node called \emph{Criticals}. Edges outgoing from this node point to the values of critical terms and are labeled appropriately. Our modified tangle will look as follows:
\[\includegraphics[scale = 0.4]{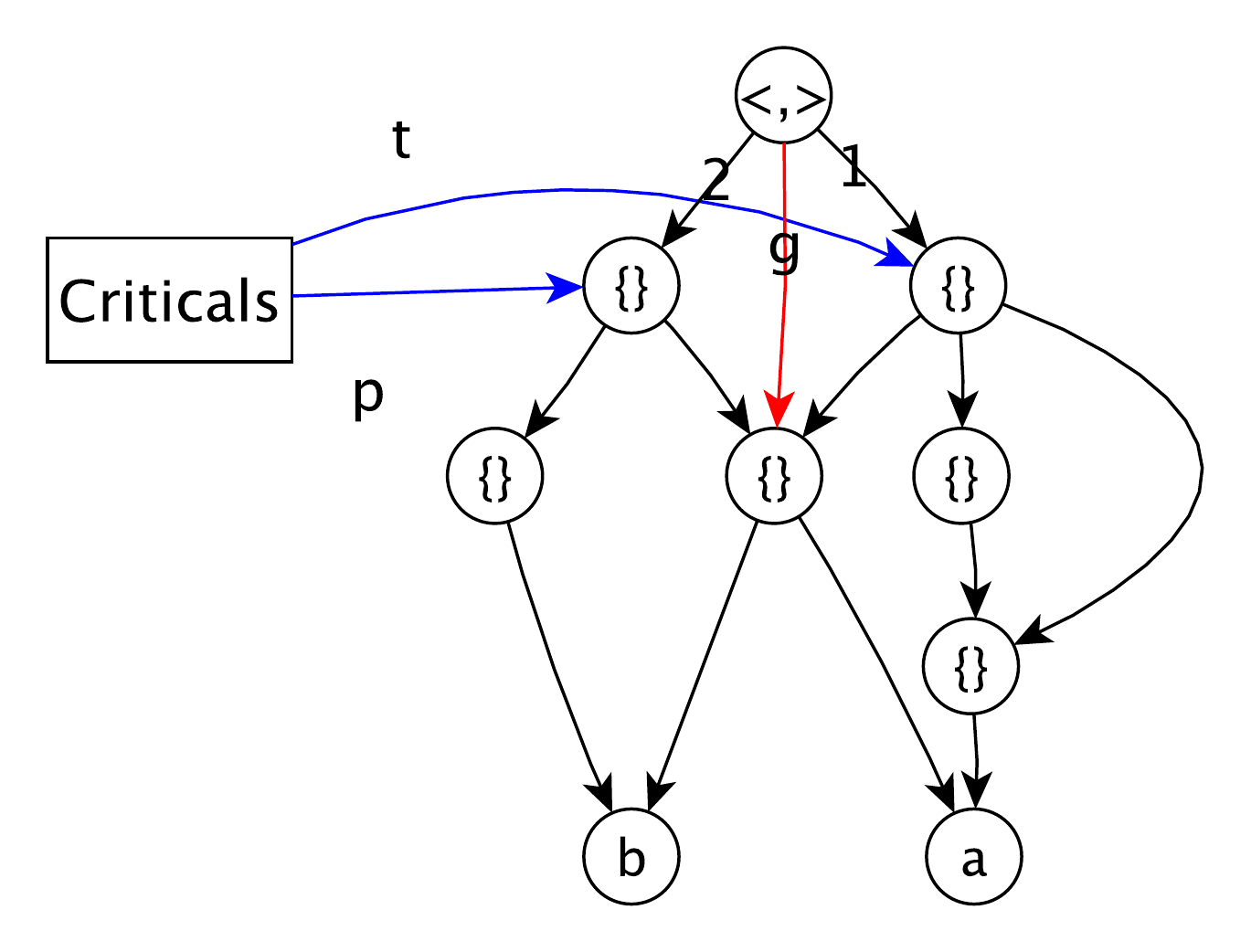}\]

{With tangles, we do not have duplicate nodes, that is, no two distinct nodes have the same subtrees, since every domain element is represented by at most one node.}

As the last step, we reverse all tangle edges, except for those representing critical terms values, to allow directed access from nodes to parents:
\[\includegraphics[scale = 0.4]{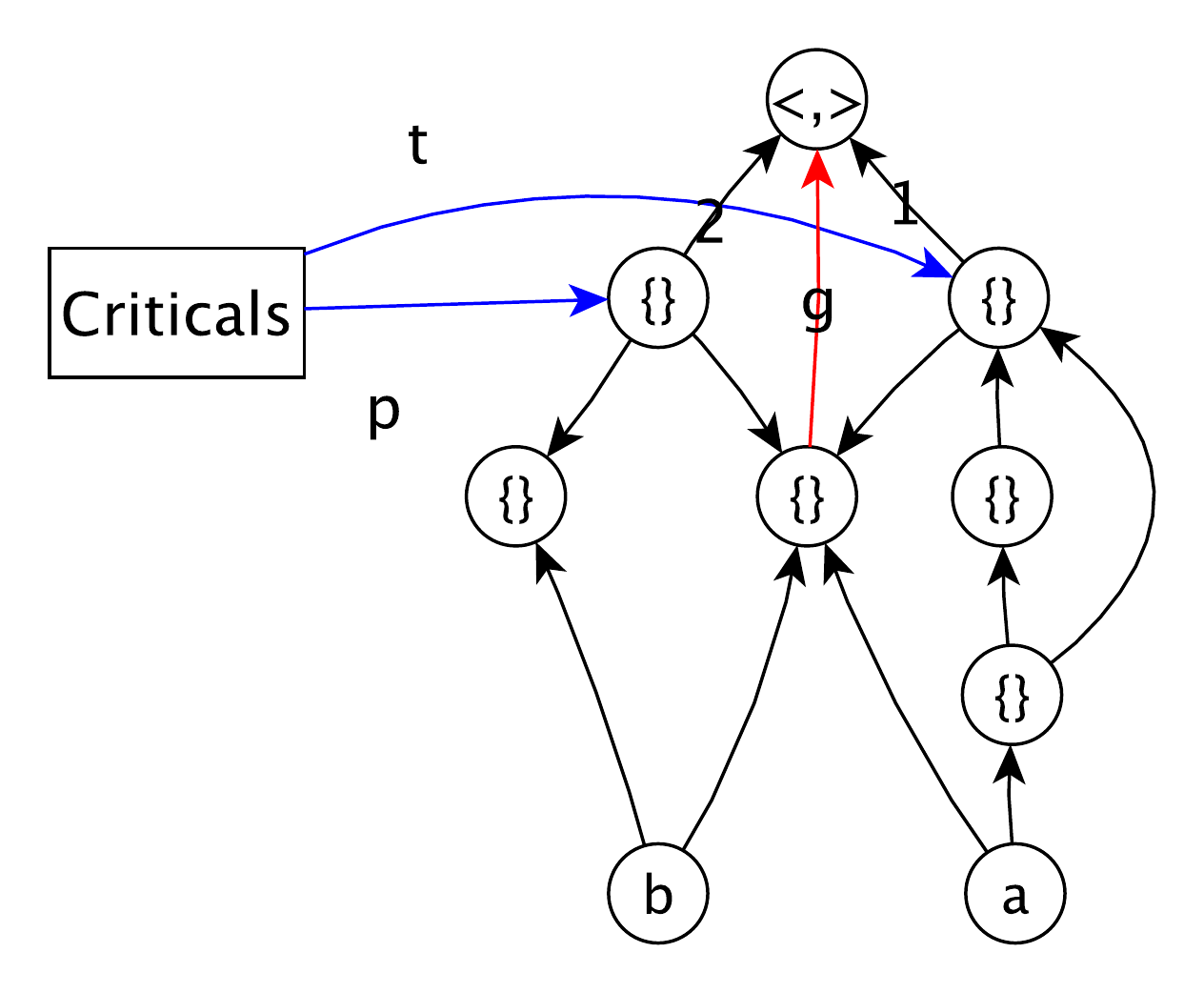}\]
{(This step is not necessary, but will have the arrows going in the direction of most of the movements.)}
Note that both in-degree and out-degree are unbounded. 

{The node labeled \textsf{Criticals} will serve as the active one in the following sequential simulation.}

\subsection{The Simulation}

We base the proof of our main result, on the fact that the evolution of any algorithm may be captured by an ASM program.
We show that given a domain simulation as above,
for each mechanical rule in a program, there is a set of transition rules of a cellular automaton that emulates it.
And since each algorithmic transition is described  by a finite rule, we will only need finitely many
automaton rules to simulate it.

 \begin{lemma}
Cellular automata  simulate the application of pairing $\langle\cdot,\cdot\rangle$ in constant time.
\end{lemma}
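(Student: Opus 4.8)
The plan is to exhibit, for the pairing constructor, a single bounded graph-rewrite rule of a dynamic cellular automaton and to argue that it fires in a number of clock ticks that does not depend on the sizes of the operands. Recall that in the tangle representation the two elements $x$ and $y$ to be paired are already present as nodes of the graph, and---being values of critical terms---are reachable from the active \textsf{Criticals} cell along single labeled edges. Consequently the relevant neighborhood of the active cell is of bounded radius, and the whole operation can be localized there.

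First I would fix how an ordered pair is encoded on top of the unordered, set-theoretic substrate: a pair is a fresh node carrying a distinguished ``pair'' color, equipped with two outgoing edges of two distinct colors, say $\textsf{fst}$ and $\textsf{snd}$, pointing respectively to the node for $x$ and the node for $y$. The use of two differently colored edges is precisely what encodes order, so that $\langle x,y\rangle$ and $\langle y,x\rangle$ are distinguished even though the set $\{x,y\}$ would not be. Because the operands are shared (pointed to, not copied), no subtree is duplicated, and the distinctness invariant of tangles is preserved.

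Next I would describe the transition itself as one detection pattern together with its rewrite. The active \textsf{Criticals} cell detects the pattern consisting of its two outgoing edges to the operand nodes; the rule then creates one new cell, attaches the two colored edges from it to the operands, and adds (or redirects) the appropriate edge from \textsf{Criticals} to the new pair cell in the fixed tuple workspace appended for the arity in question. Since the detected pattern is of fixed shape, exactly one new cell is created, and a bounded number of edges are added, this is a legal transition of a dynamic cellular automaton, applied in a constant number of ticks, independent of the sizes of the subtrees rooted at $x$ and $y$.

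The step I expect to be the main obstacle is reconciling the new pair node with the tangle invariant that no two distinct nodes share the same subtrees: in a full deduplicating representation one would want to reuse an existing pair cell pointing to the very same operands rather than create a duplicate, and a naive search for such a cell need not run in constant time. I would argue this difficulty does not arise here, because the pair is assembled in a bounded, fixed argument-register region attached to \textsf{Criticals} and is immediately consumed when the function value is looked up; it therefore plays the role of a transient query rather than a shared domain element, so a non-canonical pair node does not compromise the faithfulness of the simulation. Updating this bounded workspace is a purely local rewrite, and constant-time behavior is retained.
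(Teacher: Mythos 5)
Your encoding of pairs and your creation rule coincide with the paper's: a fresh cell with two distinctly labeled edges to the (shared) operand nodes, produced by one local rewrite at the \textsf{Criticals} focus, hence constant time. The genuine gap is in how you dispose of the deduplication obstacle that you yourself identify. You argue a duplicate pair node is harmless because the pair is a ``transient query,'' immediately consumed by a function lookup. In this simulation that is not the pair's role: the lemma concerns an assignment $p:=\langle t,t'\rangle$ in which the pair becomes the persistent value of the critical term $p$, and, more importantly, tuple nodes are precisely the \emph{locations} at which function values live in the tangle (the value of $g(t,p)$ is stored as a $g$-labeled edge hanging off the node for $\langle t,p\rangle$; the conditional lemma's rules explicitly distinguish the cases ``pair node exists with/without the value''). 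If pairing may create a second node for $\langle t,t'\rangle$, then a previously stored function value at that tuple becomes unreachable from the new node, and equality tests---which the simulation performs by node identity, relying on the tangle invariant that every value has at most one representative---would wrongly report equal pair values as distinct. So non-canonical pair nodes do compromise faithfulness.

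The repair is cheap and is exactly what the paper does: add a \emph{second, more specific} transition pattern that detects an already existing pair node and merely redirects the $p$-edge to it, creating nothing; the maximality restriction (more constrained patterns take precedence) guarantees this reuse rule fires whenever applicable, and your creation rule fires only otherwise. Crucially, this stays within constant time, because ``the pair $\langle t,t'\rangle$ already exists'' is a purely positive, bounded-radius condition: such a node is adjacent, via its two component edges, to the operand nodes, which are themselves adjacent to \textsf{Criticals}. This is the essential contrast with singleton formation, where the ``already exists'' check needs a negative condition (a parent of $t$ having \emph{no other} member) and therefore costs a linear search in the paper. Your proof needs this second rule, not the transient-workspace argument, to be correct within the paper's framework.
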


\begin{proof}
 Assume  we want to apply a rule  $p:=\langle t,t' \rangle$, where $t$, $t'$, and $p$ are  critical terms.
 The transition rule for the cellular automaton would be as follows:
\[\includegraphics[scale = 0.4]{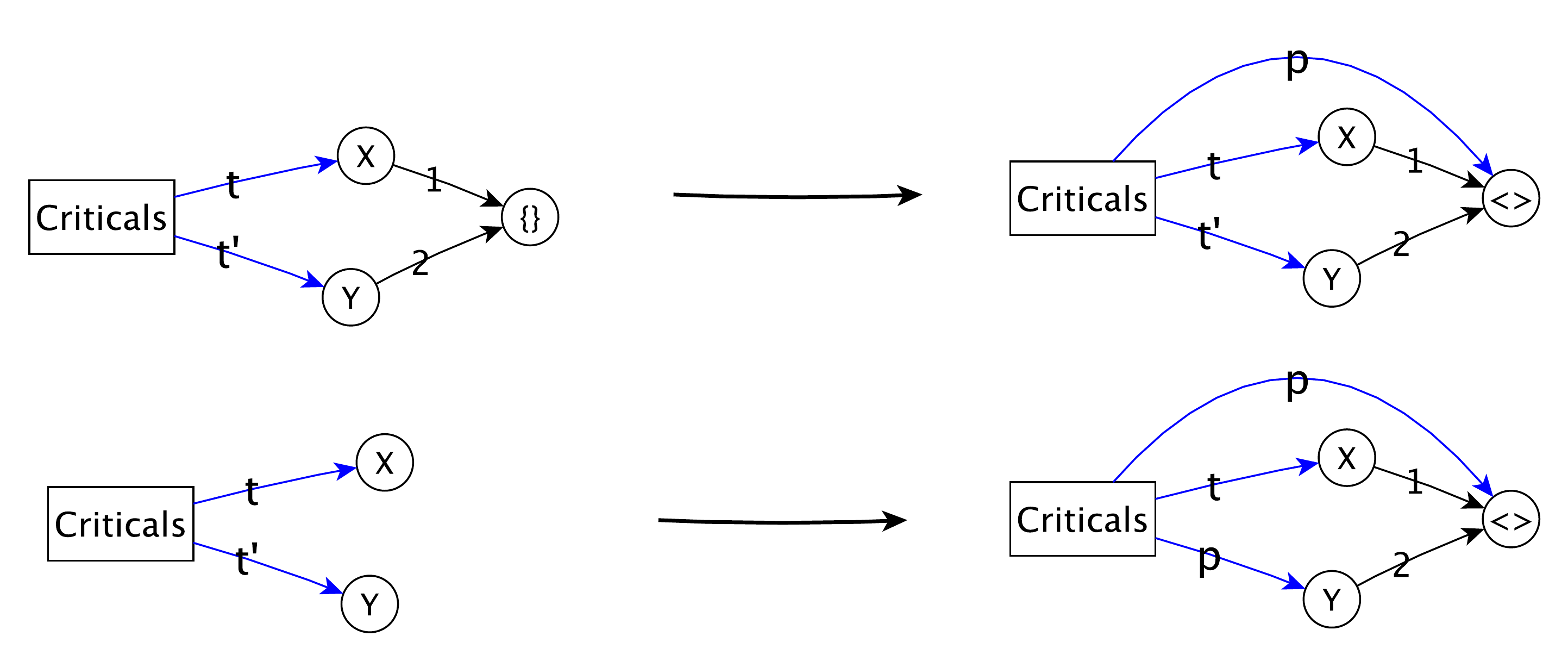}\]
{We need the second rule to cover the case when the pair  already exists;
the first rule is more general and will only fire if the second one is inapplicable.}

{(The annotations \textsf{X} and \textsf{Y} are not labels; they are used to indicate which nodes on the right of a pattern correspond to which nodes on the left.
For convenience, colorless cells like these match a node of any color;
skirting formality, this way we need not unnecessarily multiply patterns to cover every possible color combination.)}
\end{proof}

 \begin{lemma}
Cellular automata  simulate the application of choice $\Choose$ in constant time.\end{lemma}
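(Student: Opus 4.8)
The plan is to let the inherent non-determinism of pattern matching in a cellular automaton realize the arbitrary selection performed by $\Choose$. Suppose we wish to apply a rule $x := \Choose(s)$, where $s$ and $x$ are critical terms. In the reversed tangle, the active cell \textsf{Criticals} carries an edge labeled $s$ to the node $n_s$ representing the set that is the value of $s$, and each member of that set is joined to $n_s$ by a (reversed) membership edge pointing into $n_s$.

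First I would give a single transition whose left-hand pattern detects \textsf{Criticals}, its $s$-edge to $n_s$, and one membership edge running from some member node into $n_s$; the right-hand side then adds (or redirects) an edge labeled $x$ from \textsf{Criticals} to that very member, reusing the unlabeled-annotation convention of the previous lemma both to carry the chosen member across the rule and to match a node of any color. Because this pattern constrains nothing about which member is matched, it applies once for every element of the set, and the automaton fires exactly one of these matches; this is precisely the semantics of $\Choose$. Unlike the pairing rule, no node is created, so a single rule suffices here---there is no ``already exists'' case to guard against.

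Two things need checking, and these are where I expect the real work to lie. The arbitrariness of the automaton's match must line up with the arbitrariness of $\Choose$: since the matches for distinct members are all equally specific, maximality privileges none of them, so the selection is genuinely free, as required; and because the algorithm commits to the same output regardless of the choices made, any consistent selection is acceptable. The remaining subtlety is honoring the $\LET$ discipline, under which repeated references to $\Choose(s)$ within one step must resolve to the same element. I would handle this by letting the single $x$-edge installed by the rule stand as the canonical value, so that every subsequent use reads it off that edge rather than re-invoking choice. The empty-set case is then automatic: with no membership edge there is no match and no update, mirroring $\Choose$ being undefined. The whole step is one clock tick, so the simulation is constant time.
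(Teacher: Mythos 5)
Your proposal is correct and follows essentially the same route as the paper: the paper's proof likewise exploits the inherent nondeterminism of pattern matching, giving a single transition whose pattern matches the set node together with an arbitrary member and installs that member as the chosen value in one clock tick. The only cosmetic difference is in handling the $\LET$ discipline---the paper's pattern wires the chosen element directly into each of its uses in the statement body, whereas you store it on a canonical $x$-edge for later reads---but both realize the same idea.
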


\begin{proof}
This operation is used in statements of the form $\LET x=\Choose(t) \IN A$.
A straightforward definition of the appropriate transition for a cellular automaton will of necessity be nondeterministic, like the $\Choose$ operation itself.
The pattern chooses the element of $t$ for each of its uses in statement $A$, like this:
\[\includegraphics[scale = 0.4]{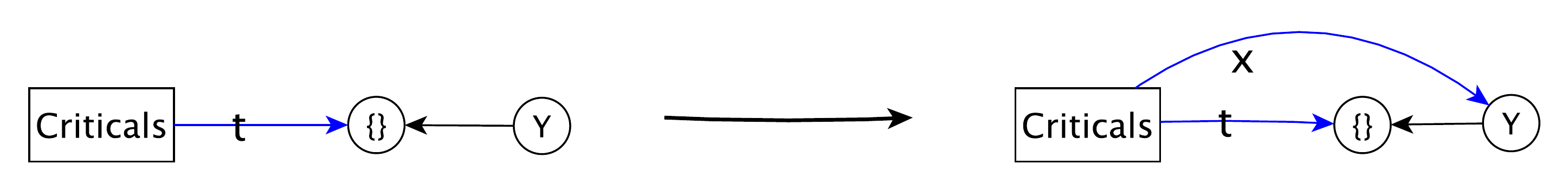}\vspace*{-5mm}\]
\end{proof}

 \begin{lemma}
Cellular automata  simulate the application of conditional tests in constant time.
\end{lemma}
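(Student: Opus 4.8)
The plan is to exploit the defining property of the tangle representation noted above: since no two distinct nodes carry the same subtrees, each domain element is represented by a \emph{unique} node. Consequently, an equality $t = t'$ between critical terms holds in a state exactly when the edge labeled $t$ and the edge labeled $t'$ leaving the active \textsf{Criticals} node terminate at the very same node. Testing an equality therefore reduces to a purely local, structural comparison within the bounded neighborhood of the active cell, with no traversal of the domain required.

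First I would recall that the guard $c$ of a conditional is a Boolean combination of equalities among the terms the program manipulates, and that all of these terms are among the finitely many critical terms, whose values hang directly off \textsf{Criticals} as labeled edges. Hence the whole guard can be decided by inspecting which of these finitely many edges coincide and which point to distinct nodes.

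Next I would enumerate the finitely many \emph{coincidence patterns}: each assignment of ``shared'' or ``distinct'' to the pairs of critical-term edges fixes a truth value for every equality occurring in $c$, and thus, by a single compile-time Boolean evaluation, a truth value for $c$ itself. I then partition these patterns into those satisfying $c$ and those not. For each satisfying pattern I produce one transition rule whose left-hand side matches exactly that configuration of coincident versus distinct edges out of \textsf{Criticals} and whose right-hand side launches the body $p$; in the two-branch form $\IF c \THEN p \ELSE q$, each non-satisfying pattern instead launches $q$. The \emph{maximality} convention then guarantees that the rule matching the precise coincidence pattern fires, rather than some coarser one.

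Because the critical terms are finite and fixed, there are only finitely many such patterns, each confined to a bounded neighborhood of the single active node, and the guard is resolved within a constant number of clock ticks, yielding the claimed constant-time simulation. The main delicacy I anticipate is the reduction of equality between \emph{compound} terms to mere node-identity: I must lean on the no-duplicate-nodes property to argue that structural equality of values coincides with pointer equality of their representing nodes, and I must ensure the neighborhood pattern is wide enough to observe every edge appearing in $c$ simultaneously while maximality disambiguates overlapping coincidence patterns. Once that is secured, the remaining Boolean bookkeeping is routine.
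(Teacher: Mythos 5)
Your treatment of equality tests is essentially the paper's: both reduce $t = t'$ to node identity via the no-duplicate-nodes property of tangles, and both decide the guard by finitely many local patterns anchored at \textsf{Criticals}, with maximality resolving which pattern fires; your explicit enumeration of coincidence patterns is, if anything, more careful than the paper's sketch of this case.

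However, there is a genuine gap: in this paper's setting, conditional tests are not only equalities between terms. The algorithms being simulated have oracle access to the Boolean membership predicate $\in$, and the paper's proof devotes half of its content to exactly this case, e.g.\ simulating $\IF t\in p\THEN t:=p$. A membership test cannot be decided by your coincidence patterns: whether $t \in p$ holds is not visible in which critical-term edges out of \textsf{Criticals} coincide; it is encoded by the presence or absence of a set-membership edge between the two (generally distinct) value nodes $X$ and $Y$ that the $t$- and $p$-edges point to. Your framework inspects only edges leaving \textsf{Criticals}, never edges among the value nodes themselves, so any guard involving $\in$ would be evaluated incorrectly or not at all. The repair is straightforward and stays within your constant-time argument---since $X$ and $Y$ lie at distance one from the active cell, the membership edge between them lies inside the bounded neighborhood, so one adds patterns that contain that edge (and, relying on maximality, coarser patterns for its absence)---but as written your proof covers only the comparison half of the lemma.
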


\begin{proof}
Each transition of an ASM performs a bounded number of actions of two types:  Boolean statements and assignments.
Since their number is bounded by the algorithm, it is enough for us to describe  the simulation of one operation of each type. 
We have two types of Boolean conditions, inclusions and comparisons:

\begin{itemize}
\item  Boolean membership $\in$ is used only  as a condition. A statement \[\IF t\in p\THEN t:=p\]  for example, is expressed as follows:
\[\includegraphics[scale = 0.4]{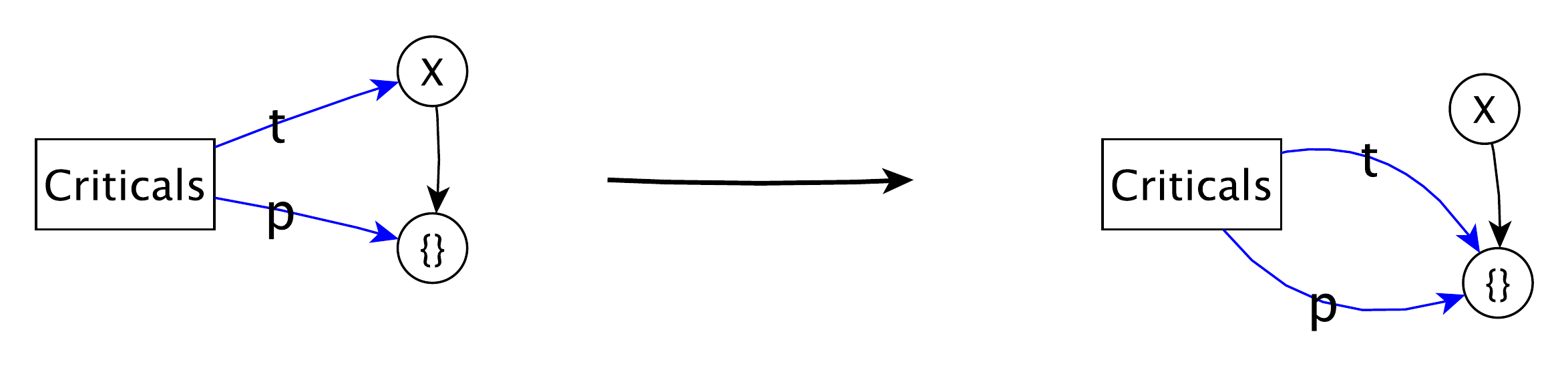}\]

\item Boolean comparison is used as a condition. For example, an ASM described by a rule \[\IF t\neq p \THEN t:=f(t,p)\] 
would be simulated by a cellular automaton with the following transitions
{to cover all cases (there is a node for $f(t,p)$; there is a node for the pair $\langle t, p\rangle$ but not the value; neither)}:
\[\includegraphics[scale = 0.4]{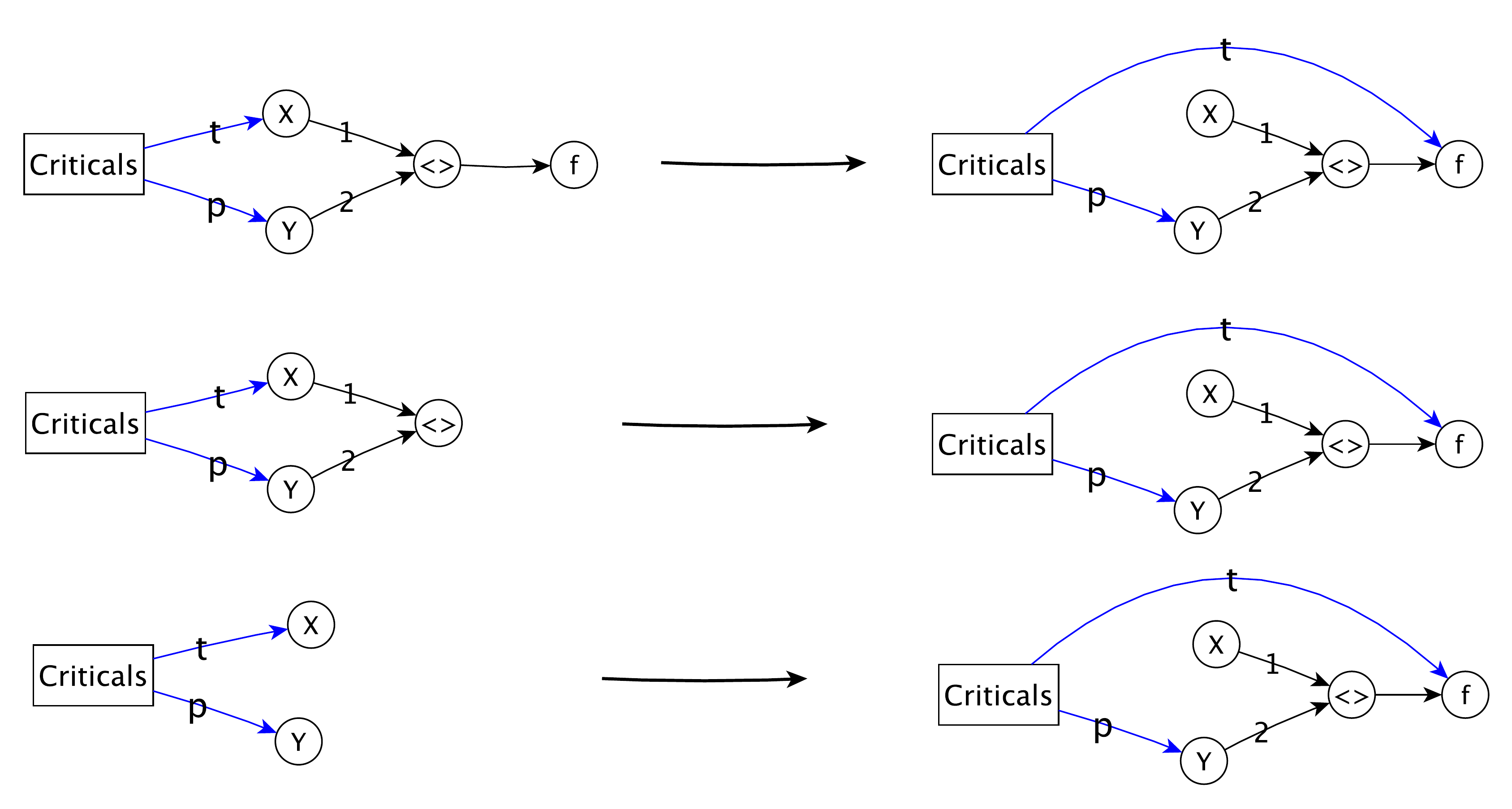}\]
\end{itemize}
\end{proof}

 \begin{lemma}
Cellular automata  simulate the application of singleton formation 
$\{\cdot \}$ in a linear number of steps.
\end{lemma}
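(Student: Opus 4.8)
The plan is to realize the assignment $p := \{t\}$ by a short subroutine that first searches for an already-existing node representing $\{t\}$ and, failing that, creates one; this preserves the duplicate-freeness invariant of the tangle, under which $\{t\}$ is represented by at most one node. That node, if present, is exactly the \emph{singleton parent} of $t$: it is reached from $t$ by one reversed edge (so $t$ is one of its members) and it has no other member. Since a node never changes the set it denotes, ``having cardinality one'' is an immutable property, so I would record it as a distinguished \emph{singleton} color placed on a node at the moment it is created as a singleton and preserved thereafter. Recognizing a candidate as the sought node then reduces to the bounded-neighborhood test ``this parent of $t$ carries the singleton color,'' costing constant time per candidate and, crucially, requiring no descent into the candidate's own members (which the all-upward reversed edges would not expose anyway).

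First I would move the active marker from \textsf{Criticals} to $t$ and begin an upward scan along the reversed edges leaving $t$, that is, through the parents of $t$. Because both in- and out-degree are unbounded, $t$ may have arbitrarily many parents, and a transition may consult only a bounded pattern; hence the sought node cannot be found in a single tick and the parents must be probed one at a time. I would drive this with a fixed palette of edge and cell colors encoding ``already visited,'' ``current,'' and ``frontier,'' so that each tick advances the marker to one not-yet-visited parent and every parent is touched exactly once. At each parent the singleton-color test above is applied. If it succeeds, the routine redirects the critical edge labeled $p$ (which, being a critical-value edge, was never reversed) to point at that parent, and then runs a complementary pass that erases every scan color, returning the tangle to its original shape except for the redirected $p$-edge.

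If instead the scan exhausts all parents of $t$ without a hit, I would allocate a fresh node $n$ by \textbf{new}, install the single reversed edge making $t$ the unique member of $n$, paint $n$ with the singleton color (so that future look-ups remain local constant-time checks), point the $p$-edge at $n$, and again clear the scan colors. In either branch the result differs from the input only by the intended value of $p$ and at most one new node, which is what faithful step-for-step simulation of the ASM assignment demands; and since the number of parents of $t$ is bounded by the number of nodes in the state, the whole subroutine runs in a linear number of steps.

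The main obstacle, and the part deserving the most care, is orchestrating this unbounded sequential probe with only \emph{finitely many} transition patterns while guaranteeing termination, single-visiting of each parent, and complete restoration of the temporary coloring, so that no spurious marks survive to corrupt later steps. Maximality of pattern application is the lever here: the more specific ``visit an unvisited parent'' pattern must be made to take precedence so the scan cannot stall or revisit, and the same discipline forces the search-then-create choice to be resolved deterministically, matching the claim that our simulation yields the same outcome despite the formal nondeterminism of the automaton. Keeping the per-parent test strictly local, via the singleton color, is exactly what prevents the routine from degenerating into a nested, quadratic membership count.
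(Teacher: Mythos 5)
Your proposal is correct in outline and structurally very close to the paper's own proof: both perform a linear, one-parent-per-tick scan over the sets containing $t$ (reachable by single reversed edges), use temporary marks plus the maximality restriction to guarantee single visits and correct sequencing, then either redirect the $p$-edge to a found node or keep a newly created one, and finish with a cleanup pass. The differences are two. First, the paper creates the tentative singleton node \emph{up front} (marked \textsf{singleton suggestion}) and discards it if an existing node turns up, whereas you defer creation until the search fails; this is immaterial. Second, and more substantively, the paper recognizes a candidate by a \emph{direct local pattern test}: the rule ``this parent of $t$ is the singleton'' is blocked, via maximality, by the more specific pattern ``this parent of $t$ has another member,'' so each candidate is resolved in constant time with no auxiliary invariant. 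You instead introduce a persistent \textsf{singleton} color assigned at creation time. That mechanism can be made to work, but it carries proof obligations you do not discharge: the invariant must already hold in the initial tangle (so the input encoding must pre-color every singleton node), and it must be preserved by every other node-creating rule --- in particular the union construction, whose placeholder node transiently has exactly one member before more are copied in, so one must argue that it is never released as an uncolored singleton. Relatedly, your closing claim that the color is what prevents a ``nested, quadratic membership count'' is overstated: the paper attains the same linear bound with no such invariant, because maximality already turns the cardinality-one check into a constant-time per-candidate test.
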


\begin{proof}
Assume that an algorithm applies a rule  $p:=\{t\}$, where $t$ and $p$ are  critical terms. 
We simulate the singleton operation in three steps. First we create a node for the singleton and mark it \textsf{singleton suggestion}. We also choose another node, if there is one, and mark it \textsf{singleton candidate}:
\[\includegraphics[scale = 0.4]{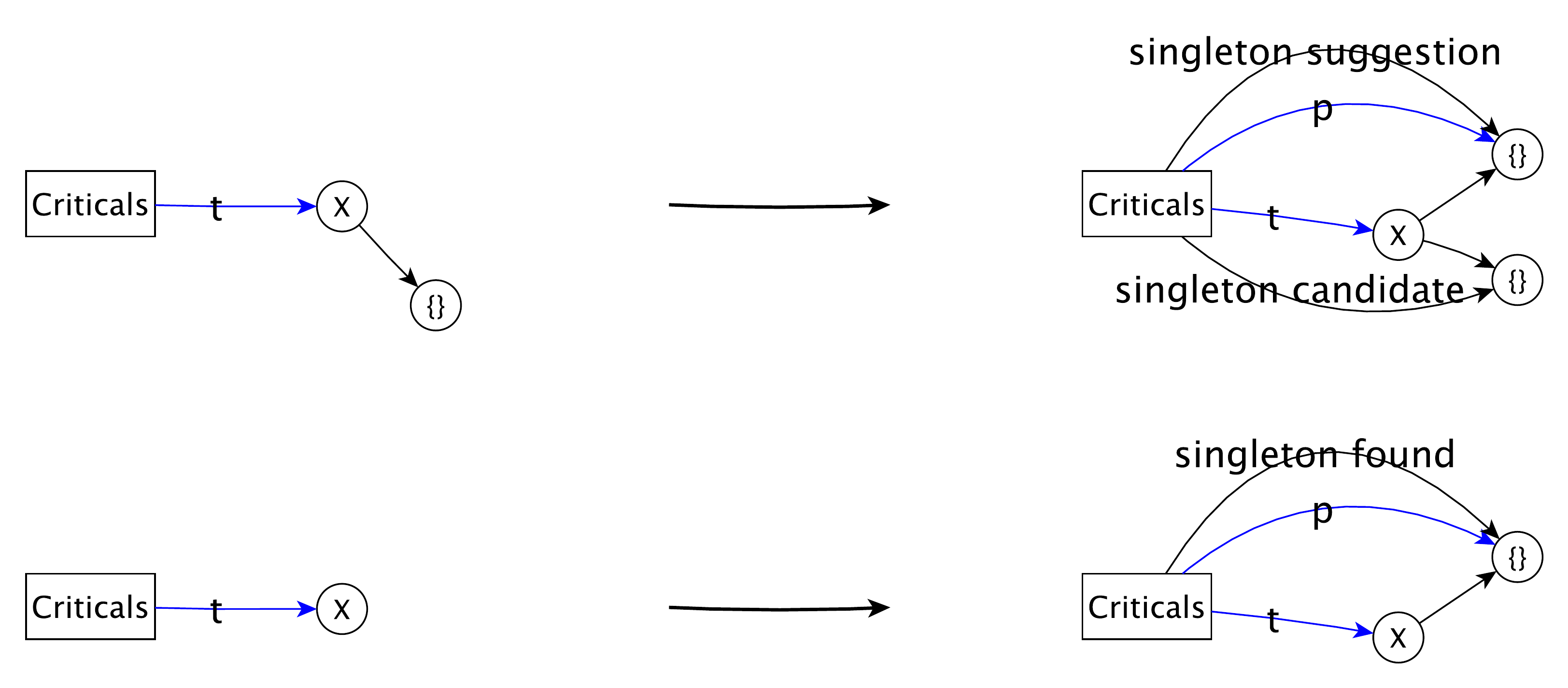}\]

Then we check if there there already exists a node for that singleton, and if so, we discard the new singleton node created in previous step. To check,  we go over all neighbors of \textsf{X} and check each of them in turn.
If the requisite singleton node is found, we point to it as the singleton (with a $p$-marked arrow) and disconnect the newly created node from the tangle. If the tested node is not the desired one, we mark it with a cable that states that the node was tested and move on to the next candidate. When there are no candidates we mark the newly created node as the desired singleton.
\[\includegraphics[scale = 0.4]{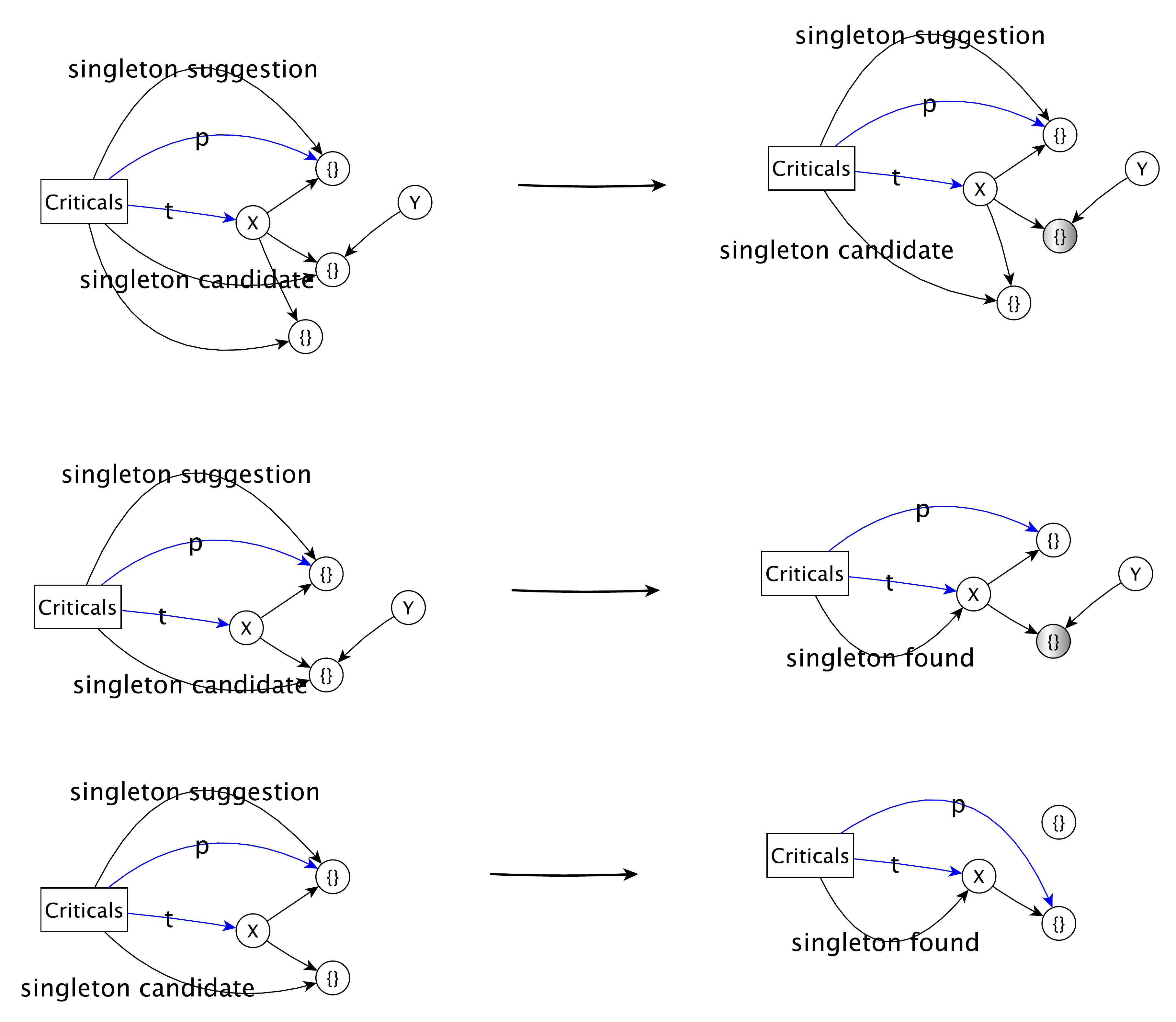}\]

As the last step, we remove the marks from the nodes and then remove the edge \textsf{singleton found}:
\[\includegraphics[scale = 0.4]{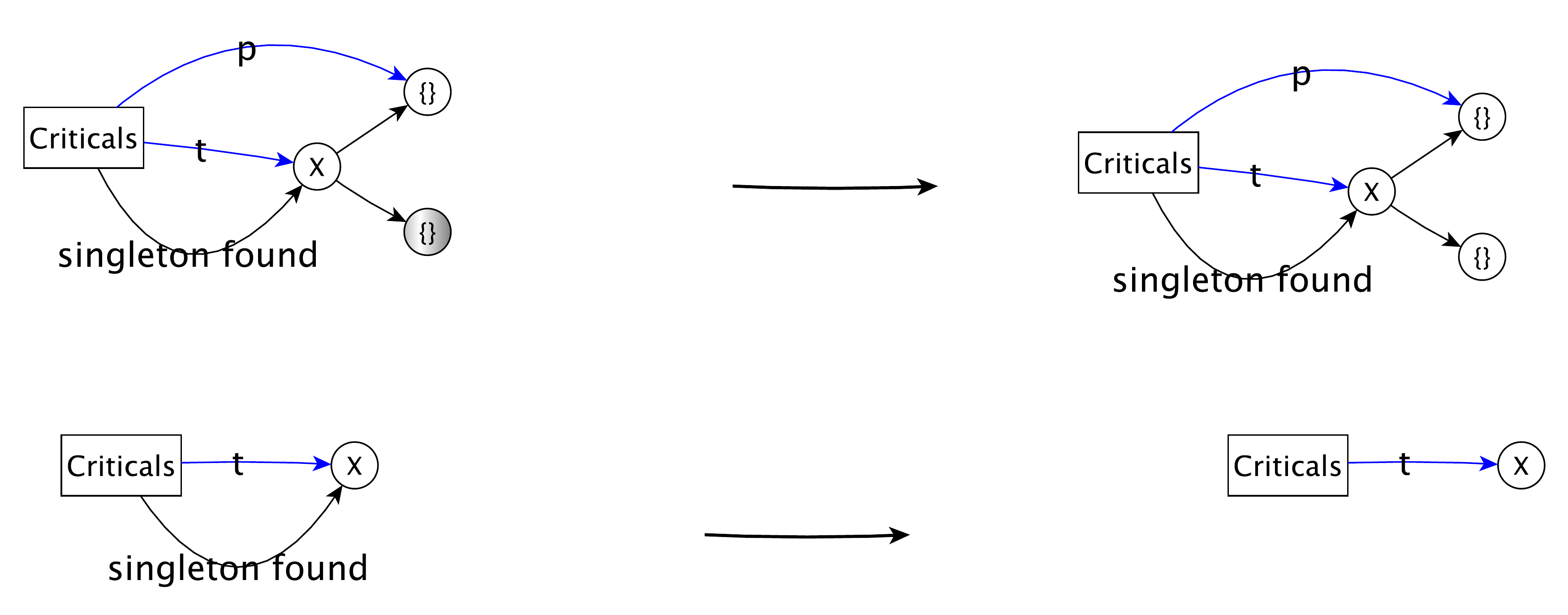}\]

As always, we use the rule which forces the most constrained pattern to be applied.

{The cost is linear, since we need to check every set of which $t$ is a member to see if it is a singleton.}
\end{proof}

\begin{lemma}\label{th:union_ops}
Cellular automata  can simulate  applications of the union of two sets with a quadratic number of operations (relative to the number of elements in the sets).
\end{lemma}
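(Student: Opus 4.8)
The plan is to reuse the marker-based, three-phase template of the singleton construction, now building a node whose children are exactly the set-theoretic union of the children of $t$ and $t'$. Given a rule $p := t \cup t'$ with critical terms $t$, $t'$, $p$, I would first create a fresh node $u$, mark it \textsf{union suggestion}, and tentatively attach the $p$-edge from the \textsf{Criticals} node to it. As in the proofs above, each local manipulation is a single cellular-automaton transition pattern, sequenced by auxiliary marker colors and ``tested'' cables and governed by the maximality rule so that the most specific applicable pattern fires.

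The first phase collects elements. I would copy every child of $t$ onto $u$, walking the children of $t$ one at a time: mark an untested child $c$, add the edge $u \to c$, flag $c$ with a tested cable, and move on; when no untested child of $t$ remains this phase ends, costing $O(|t|)$ transitions. The second phase merges in $t'$ while removing duplicates: for each child $c'$ of $t'$ I test whether $c'$ already occurs among the children already attached to $u$, adding the edge $u \to c'$ only if it does not. Because the domain is unordered, each such membership test must scan the children collected so far, so merging all of $t'$ costs $O(|t|\cdot|t'|)$ transitions. This nested scan is the quadratic heart of the simulation and is the reason the naive method cannot be driven below a quadratic bound.

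The third phase restores the tangle invariant that no two distinct nodes carry the same child set. After $u$ holds the deduplicated union, I would check whether the domain already contains a node with exactly this child set: picking any element $e$ already attached to $u$, I scan the parents of $e$ (every set equal to $t \cup t'$ must contain $e$ and is therefore a parent of $e$ in the reversed tangle) and compare each candidate parent's child set against that of $u$. If a match is found, I redirect the $p$-edge to the existing node and disconnect $u$; otherwise $u$ becomes the genuine union node. Finally I strip all markers and the \textsf{union found} edge, exactly mirroring the closing phase of the singleton proof.

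The step I expect to be the main obstacle is this last one, maintaining the no-duplicate invariant. Unlike the singleton case, where a candidate either is or is not the singleton $\{t\}$ by a constant-size inspection, here each candidate must be compared element-by-element against a set of size up to $|t|+|t'|$, and the bookkeeping---marking candidates as tested, cleaning up the markers afterward, and arranging the patterns so that maximality selects them in the intended order---must all be shown to stay within the quadratic budget. Getting this accounting right, rather than the element merge itself, is where the care lies.
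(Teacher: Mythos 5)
Your proposal is correct and lands on the same quadratic bound, but it organizes the work in the reverse order from the paper and locates the quadratic cost in a different place. The paper's proof \emph{first} searches for a pre-existing union node---walking over all nodes accessible from one element of the union, locking each candidate with a \textsc{check} edge and verifying it in linear time by color-marking the elements shared among the candidate, $t$, and $t'$---and only \emph{then}, if the search fails, creates the new node and copies elements into it. You instead follow the template of the singleton lemma: create a tentative node $u$ first, fill it, and check for a duplicate afterwards, disconnecting $u$ if one is found. Both orderings are legitimate (the paper itself uses your ordering for singletons), and your phase-three parent scan with element-by-element comparison is exactly the paper's linear-per-candidate verification, so the total stays quadratic. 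The one place where your accounting goes astray is phase two: you claim the merge of $t'$ into $u$ forces an $O(|t|\cdot|t'|)$ nested scan and ``cannot be driven below a quadratic bound.'' That overlooks the tangle invariant the paper relies on: every domain element is represented by a \emph{single shared node}, so testing whether a child $c'$ of $t'$ is already attached to $u$ is not a scan but a constant-size pattern question (does the edge $u\to c'$ exist, equivalently is $c'$ also a child of $t$), resolved by the maximality rule in precisely the way the pairing lemma's ``already exists'' rule suppresses its general rule. Consequently the paper's creation phase is linear; the genuine quadratic bottleneck---in both your argument and the paper's---is the search over linearly many candidate nodes with a linear-time equality check for each, not the element merge. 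Since your overestimate still fits under the quadratic budget, this is a flaw of analysis rather than of correctness.
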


\begin{proof}
Suppose we want to simulate the operation $t:=s\cup p$, where $t,p,s$ are critical terms, with $s$ pointing to a node indicated by $X$
and $p$ pointing to $Y$. 
The simulation will proceed in several stages; the correct order of those steps will be assured by the maximality restriction on transitions.
Similar rules should be added to the transition for each possible node coloring.
Recall that we want only one instance of each value.

In the beginning, we have to find whether we already have a node representing union of $s$ and $p$.
For this we will go over all accessible nodes from (any) one of the elements that belong in the union. We will show that verifying one node can be done in linear time, so the overall procedure runs in quadratic time.
\begin{enumerate}
\item Assume we want to check whether the element whose  root is pointed to by $u$ is the union of $s$ and $p$. We start by creating a special edge to this element. This edge, labeled \textsc{check}, will serve as a lock indicating that we are in the midst of the verification process and will not allow  other transitions to get involved in the middle.
\[\includegraphics[scale = 0.4]{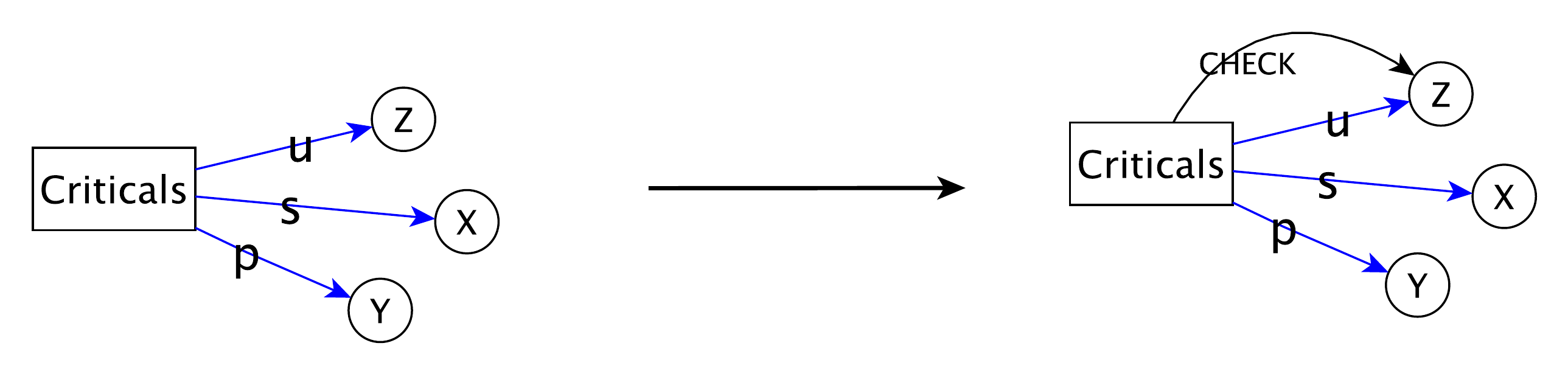}\]

\item Next, we detect the elements appearing in all of $u$, $s$ and $p$. Edges from those elements  are colored with a special color:
\[\includegraphics[scale = 0.4]{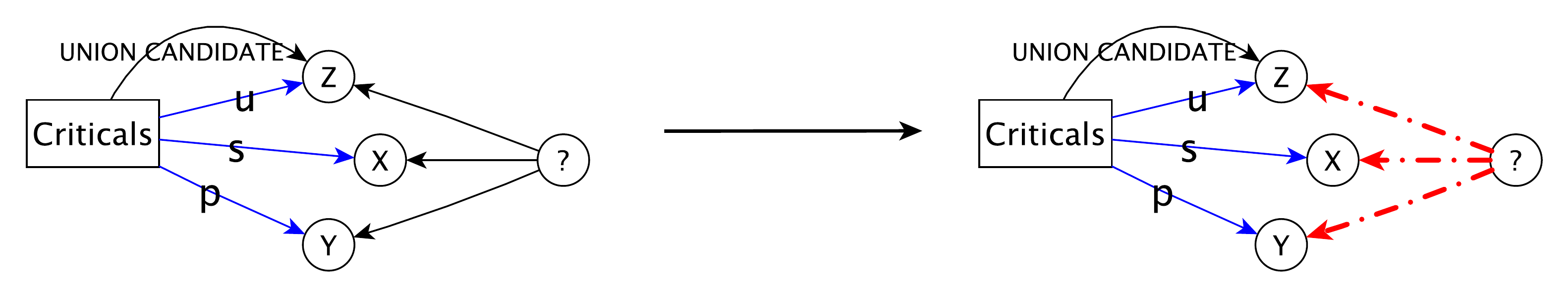}\]
The same is done (in parallel) with $t$ and $p$.

\item Next, we detect common elements of $u$ and $s$ but not in $p$. 
Pointers from detected elements are again marked with the special color:
\[\includegraphics[scale = 0.4]{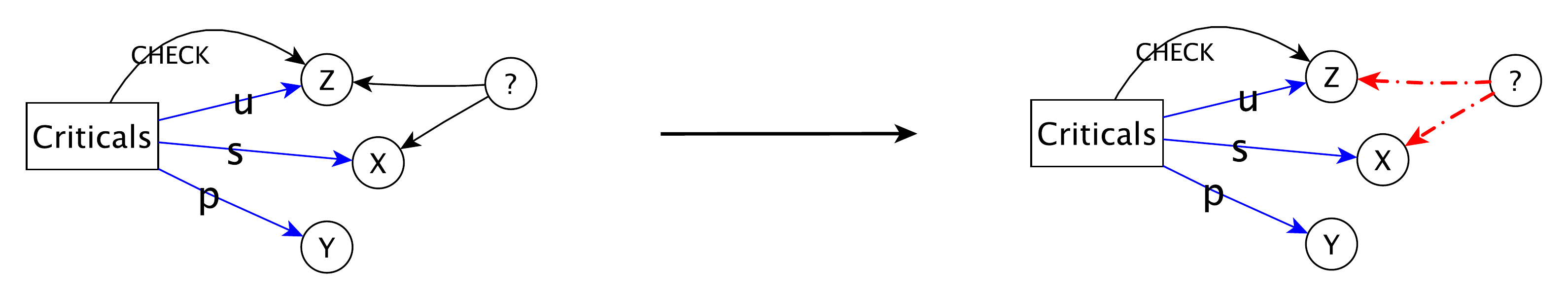}\]
The same is done with $u$ and $p$.

\item If $s$, $p$, and $u$ are all empty, then $u$ is indeed the union of $s$ and $p$. Mark it as such. Otherwise, $u$ is not the union node:
\[\includegraphics[scale = 0.4]{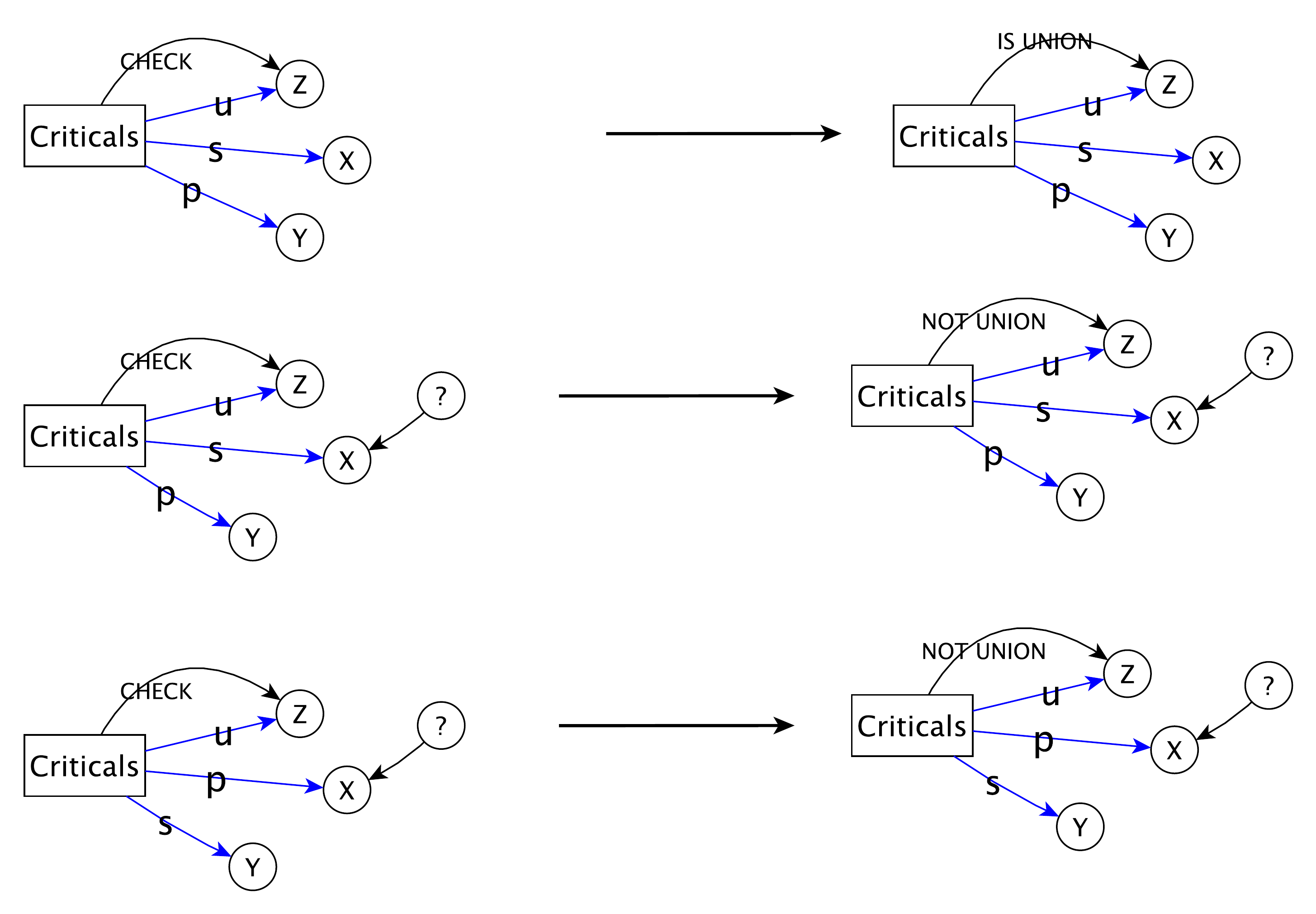}\]

\item Once the status of $u$ is clear, we remove the marks from edges:
\[\includegraphics[scale = 0.4]{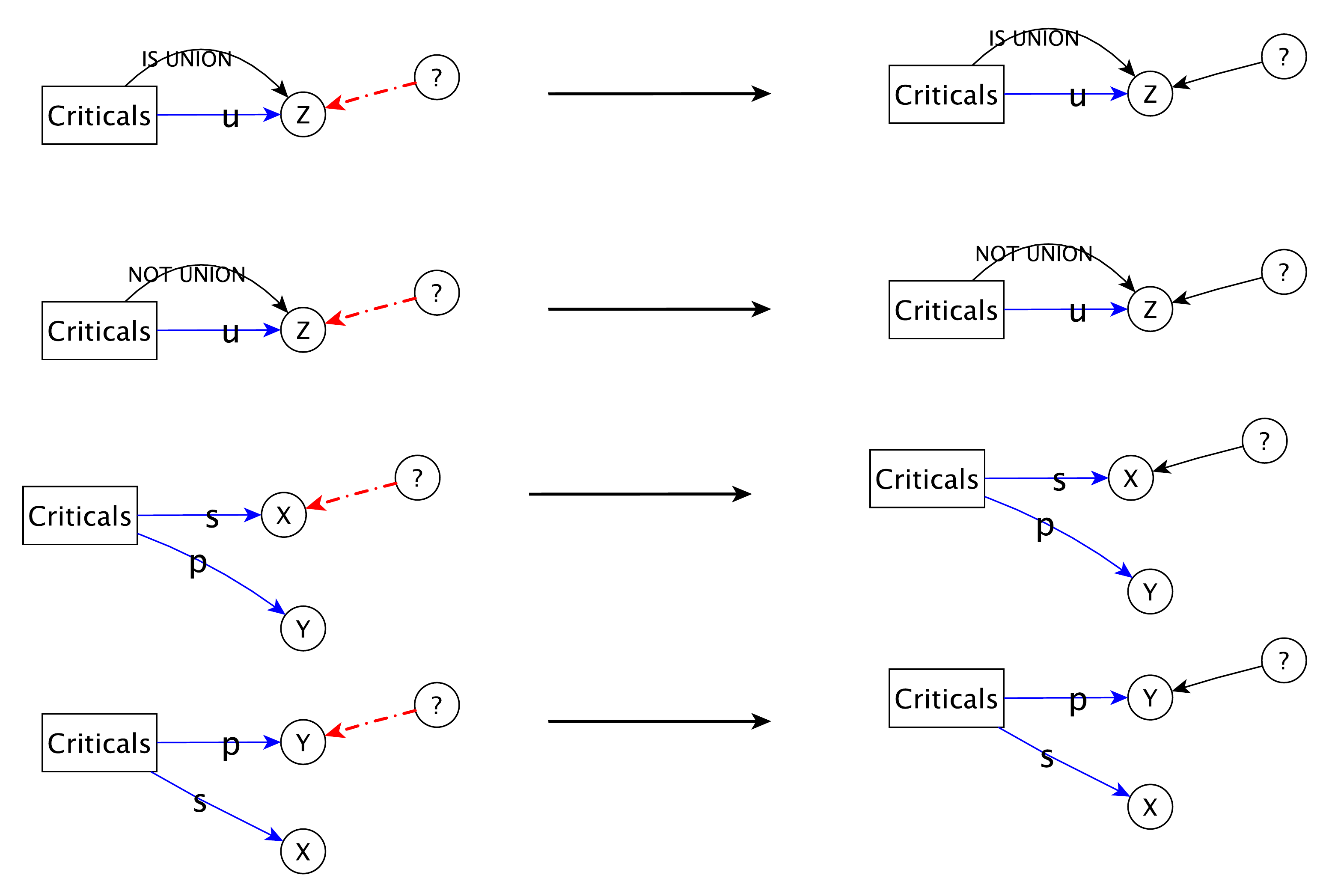}\]

{Each element identified to not be the union is marked with a special color  for the duration of the search so as not to re-check it, similar to the singleton case.}

\end{enumerate}

{Once all the possible elements have been checked, and no union found, 
 we are ready to create the union.}

\begin{enumerate}
\item {First, we create a new node which will eventually hold the union:}

\[\includegraphics[scale = 0.4]{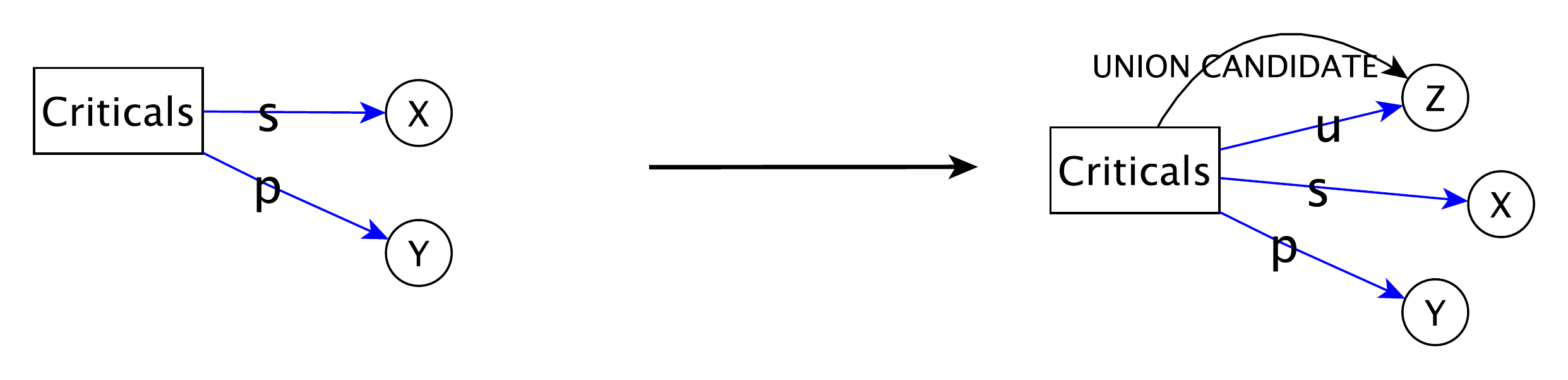}\]
{A special marked edge  tells the automaton that it is the process  of creating a union.}

\item We start with  copying   to $u$ the elements that are common to $s$ and $p$, {and mark the edges}:
\[\includegraphics[scale = 0.4]{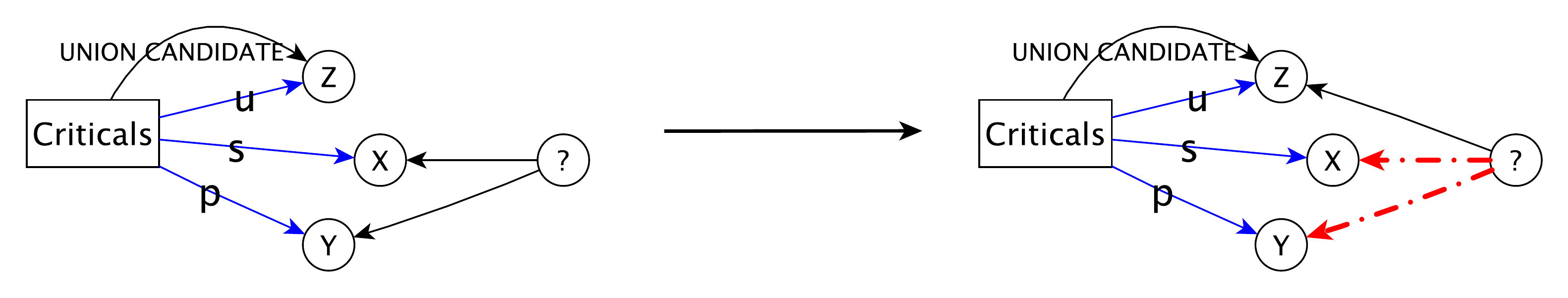}\]

\item In a similar manner, we copy elements that are present in one set only:
\[\includegraphics[scale = 0.4]{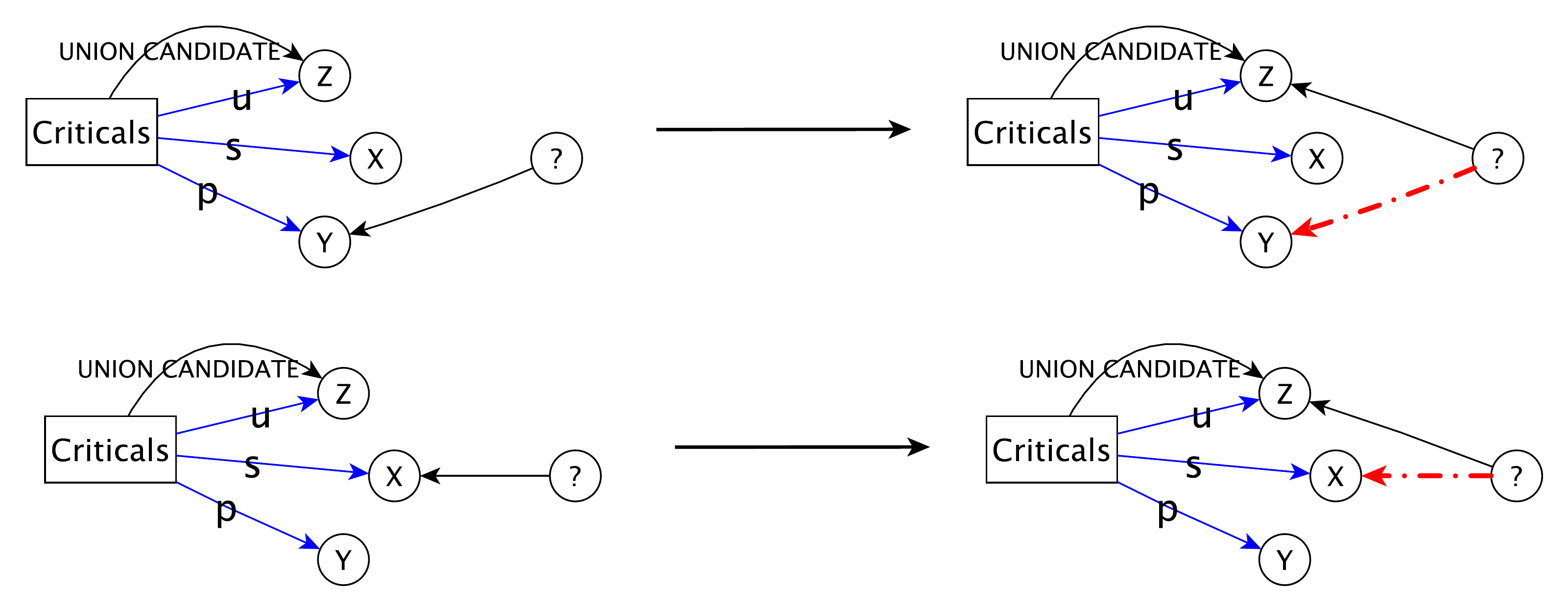}\]

\item Once all edges are marked, we know that the desired node is created and we mark it appropriately:
\[\includegraphics[scale = 0.4]{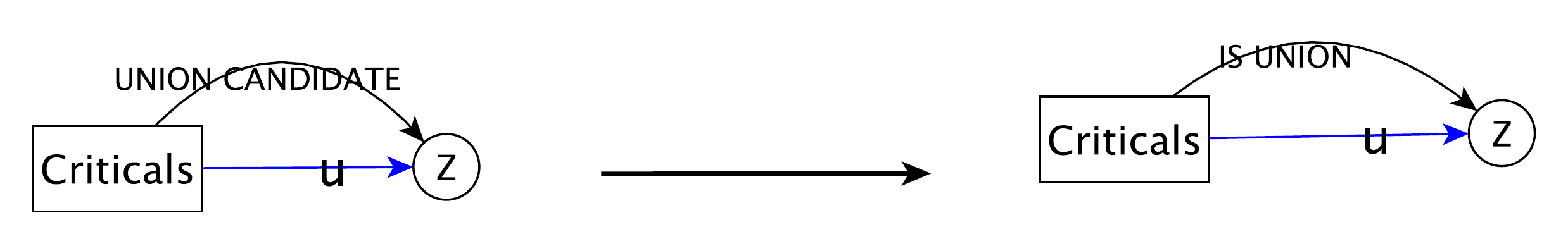}\]
This rule applies only when no element transfers remain.

\item {Once the \textsc{is union} mark appears, all that remains is to clean the marks left en route:}
\[\includegraphics[scale = 0.4]{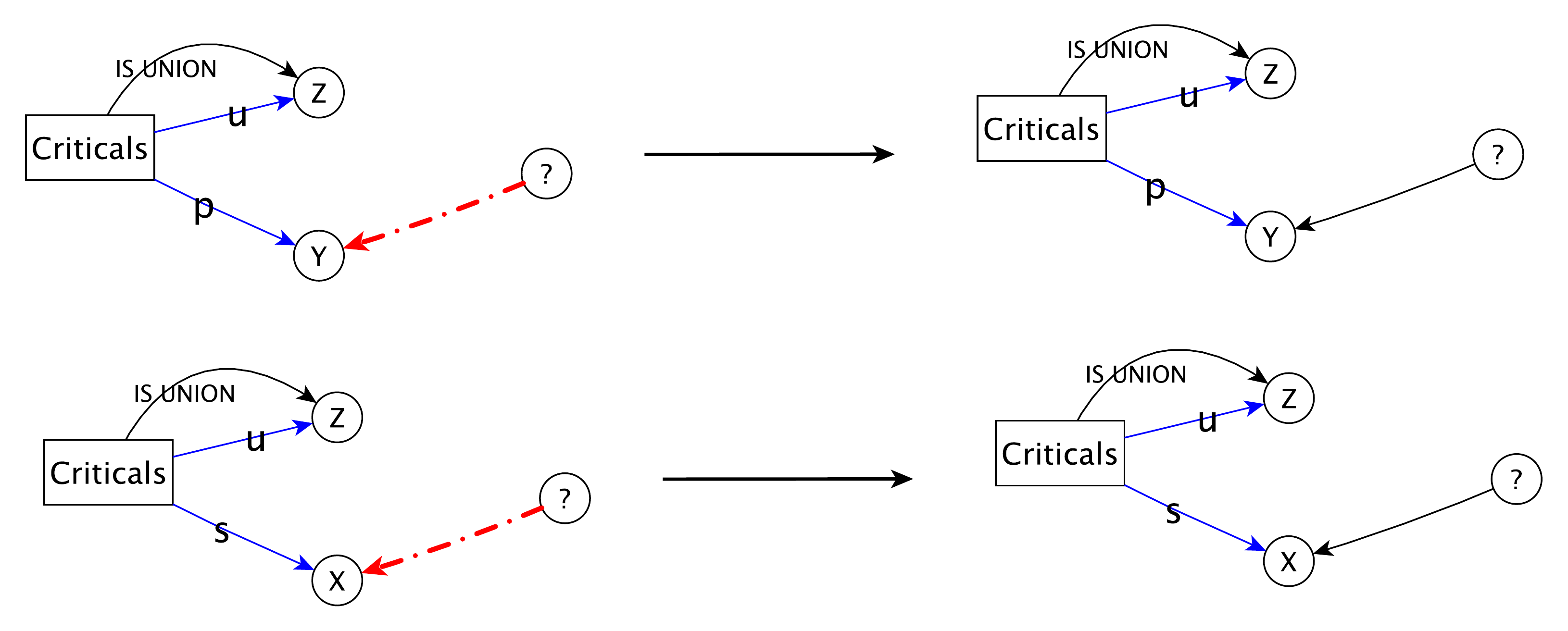}\]

 \item As soon as the union is ready and its neighborhood is clean, we may remove the lock:
 
 \[\includegraphics[scale = 0.4]{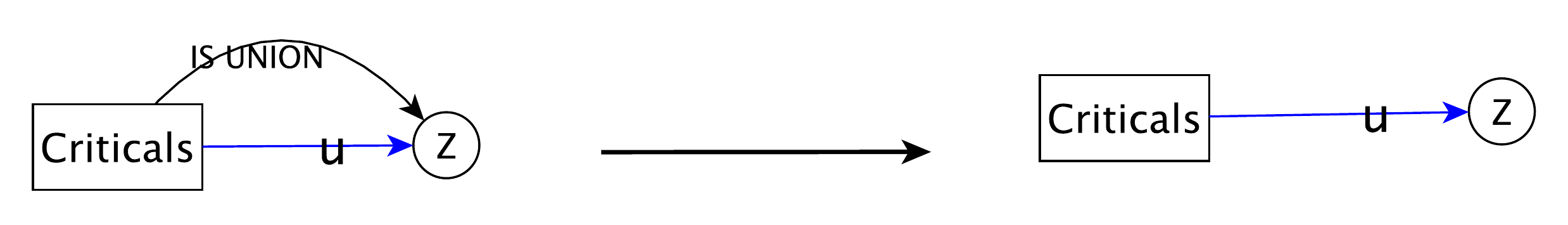}\]

 \end{enumerate}

Note  again that the maximality restriction on transitions ensures that 
all the above steps are applied in the prescribed order.
\begin{ignore}
To prevent this, we  again take advantage the maximality restriction: 
we combine each inclusion rule with each foreign rule of the transition, forcing the automaton
to keep on going with the inclusion, before returning the other transitions. We then replace the inclusion rules with the obtained set of rules.
For example, assume we have the following rule:
\[\includegraphics[scale = 0.4]{pics/unordered-transition-example_union.pdf}\]
Take as an example one of the union rules:
\[\includegraphics[scale = 0.4]{pics/unordered-transition-union2.pdf}\]
We want to ensure that this rule is applied before the other rule.
So we combine them together in one transition rule:
\[\includegraphics[scale = 0.4]{pics/unordered-transition-union5.pdf}\]
We do so for all rules in the transition, and replace the inclusion rules with the outcome.
\end{ignore}
\end{proof}

Every classical algorithm is emulated step-by-step, state-by-state by an ASM
consisting of  a fixed number of comparisons and assignments~\cite{ASM-Theorem-Gurevich}.
That fact, along with the previous lemmata, is what is needed  to achieve our goal:

\begin{theorem}[Main]\label{th:main}
Cellular automata with {bounded dynamics (i.e.\@ all the nodes in a pattern are within a bounded distance of the focus) and without loops (there are no directed cycles within patterns)} can simulate the performance of  any classical algorithm over an unordered domain
with {quadratic} multiplicand overhead.
\end{theorem}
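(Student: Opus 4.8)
The plan is to assemble the preceding lemmata into a full simulation by first reducing an arbitrary algorithm to an ASM program and then translating the latter, operation by operation, into the cellular-automaton transitions already constructed. By Gurevich's ASM theorem~\cite{ASM-Theorem-Gurevich}, every classical algorithm over an unordered domain is emulated step-for-step and state-for-state by an ASM program whose single transition comprises a fixed, finite number of conditional tests and generalized assignments built from the vocabulary operations $\in$, $=$, $\langle\cdot,\cdot\rangle$, $\Choose$, $\{\cdot\}$, $\cup$, and the non-constructor function lookups. It therefore suffices to exhibit, for each ASM step, a bounded sequence of automaton transitions realizing the same change on the tangle representation, and to bound the total number of automaton moves.

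First I would fix the state correspondence. To each ASM state $X$ I associate the reversed-edge tangle with its \textsf{Criticals} focus node, as described above; since tangles carry at most one node per domain element and their edges encode membership, this representation is faithful, and, because it abstracts away node identities, it respects the closure under isomorphism demanded by the algorithmicity postulate. I would then check that every pattern occurring in the lemmata touches only nodes reachable from the focus within a fixed radius and contains no directed cycle, so that the two structural side-conditions of the theorem hold for the assembled rule set; the long-range scans required for singleton formation and union are performed not by large patterns but by local patterns that advance a marker or \textsc{check} lock one edge at a time.

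Next I would simulate one ASM transition. Because the program is a parallel composition of conditionally-guarded assignments over a fixed critical-term set, I would serialize it in read-then-write fashion: evaluate each critical subterm, constructing any needed intermediate values via the pairing, choice, and singleton lemmata and the function-value lookups; evaluate the guards via the membership- and comparison-test lemma; and finally perform the bounded collection of assignments, redirecting the outgoing \textsf{Criticals} edges. Sequencing of these finitely many sub-operations, and of the internal phases within each one, is enforced by the maximality restriction together with the lock and marker edges introduced in the lemmata, which prevent any later phase from firing until the current one completes; this is the mechanism that carries the bounded control state of the ASM step across the sequence of local automaton moves.

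Finally I would tally the cost. Each ASM step is simulated by a bounded number of sub-operations, and by the lemmata the most expensive of these---set union---runs in a number of automaton moves quadratic in the number of elements involved, with singleton linear and the rest constant. Hence a single ASM step costs $O(n^2)$ automaton steps, where $n$ bounds the sizes of the manipulated sets, giving the claimed quadratic multiplicand overhead over the whole run. I expect the main obstacle to lie not in any individual lemma but in the serialization of the parallel update alongside maintenance of the no-loop and bounded-radius invariants: one must verify that all reads see the pre-transition state, that no two sub-operations interfere through shared intermediate nodes, and that the scanning locks never introduce a directed cycle nor force a pattern to reach beyond its fixed radius.
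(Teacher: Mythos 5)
Your overall route is the same as the paper's: invoke Gurevich's ASM theorem to reduce the algorithm to a fixed, finite bundle of guarded assignments, translate each operation via the preceding lemmata, enforce phase ordering and non-interference with locks/marked edges plus the maximality restriction, and let the union lemma dominate the cost. (The paper's specific device for non-interruption is to recolor the \textsf{Criticals} node for the duration of a singleton or union simulation, which is a variant of the lock mechanism you describe; you correctly identify this serialization issue as the crux.)

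There is, however, one concrete missing step in your cost accounting, and it is precisely the step the paper's proof supplies. You conclude that each ASM step costs $O(n^2)$ automaton moves ``where $n$ bounds the sizes of the manipulated sets,'' but you never argue why $n$ is controlled, and without that the phrase ``quadratic multiplicand overhead'' has no content---a priori the sets being unioned could grow too fast for the bound to mean anything in terms of the input and the running time. The paper closes this by observing that each step of the original algorithm creates only a bounded number of new sets (each constructor adds a bounded number of new nodes to the tangle, with union pointing its membership edges at already-existing nodes), so the size of any set manipulated at step $t$ is at most the size of the initial state plus a constant multiple of $t$. Hence the union at step $t$ costs $O\bigl((n_0 + ct)^2\bigr)$ moves, which is what justifies calling the per-step overhead quadratic. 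You should add this growth argument; everything else in your proposal matches the paper's proof.
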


\begin{proof}
{We  have  to ensure that, once the automaton starts to simulate the singleton or union operation,
it cannot be interrupted by the application of other transition rules.
Otherwise, foreign steps could affect the elements of the sets involved in these set operations.
This problem can be precluded, for instance, by changing the color of the \textsf{Criticals} node during the
simulation of those operations.}

{Each step of the original algorithm can only create a bounded number of new sets.
Hence the size of the sets involved in any union operation is bounded by the size of the sets in  the initial state plus some multiple of the algorithm's steps so far.
So the overall overhead caused by unions is quadratic.}
\end{proof}

\section{Discussion}

We have outlined the basic features of dynamic cellular automata and proved that this model is flexible enough
to simulate arbitrary computations over unordered domains.
It may perhaps be possible to reduce the cost of the simulation.
{In particular, allowing negative edges in patterns, for labeled edges that must not appear, can reduce the complexity of the union operation.
One may also consider allowing for duplication, which would increase the cost of comparisons but reduce the cost of union.}

{Another question is at what added expense could one  bound the degree of nodes.}

One task facing us now is to describe a natural extension to the parallel case. In this case, at each step, all  cells are active and can all affect their neighborhoods.

\subsection*{Acknowledgement}
This work benefited greatly from long discussions with Gilles Dowek.
{We thank the reviewer for a careful reading.}

\bibliographystyle{eptcs}
\bibliography{models}
 
\end{document}